  \providecommand\BibTeX{{%
    \normalfont B\kern-0.5em{\scshape i\kern-0.25em b}\kern-0.8em\TeX}}}
\newcommand{\protocol}{{\sf Advocate} }
\newcommand{\protocolnosp}{{\sf Advocate}\ignorespaces}
\newcommand{\mc}{\mathcal}
\newcommand{\eg}{e.g.~}
\newcommand{\ie}{i.e.,~}
\newcommand{\st}{s.t.~}
\newcommand{\wrt}{w.r.t.~}
\newcommand{\party}{\mc{P}}
\newcommand{\adversary}{\mc{A}}
\newcommand{\persistenceParam}{k}
\newcommand{\livenessParam}{u}
\newcommand{\miningpower}{\mu}
\newcommand{\chain}{\mc{C}}
\newcommand{\bnm}{\begin{newmath}}
\newcommand{\enm}{\end{newmath}}
\newcommand{\bea}{\begin{eqnarray*}}%
\newcommand{\eea}{\end{eqnarray*}}%
\newcommand{\bne}{\begin{newequation}}
\newcommand{\ene}{\end{newequation}}
\newcommand{\bal}{\begin{newalign}}
\newcommand{\eal}{\end{newalign}}
\newenvironment{newalign}{\begin{align}%
\setlength{\abovedisplayskip}{4pt}%
\setlength{\belowdisplayskip}{4pt}%
\setlength{\abovedisplayshortskip}{6pt}%
\setlength{\belowdisplayshortskip}{6pt} }{\end{align}}
\newenvironment{newmath}{\begin{displaymath}%
\setlength{\abovedisplayskip}{4pt}%
\setlength{\belowdisplayskip}{4pt}%
\setlength{\abovedisplayshortskip}{6pt}%
\setlength{\belowdisplayshortskip}{6pt} }{\end{displaymath}}
\newenvironment{newequation}{\begin{equation}%
\setlength{\abovedisplayskip}{4pt}%
\setlength{\belowdisplayskip}{4pt}%
\setlength{\abovedisplayshortskip}{6pt}%
\setlength{\belowdisplayshortskip}{6pt} }{\end{equation}}
\newcounter{ctr}
\newcounter{mytable}
\def\mytable{\begin{centering}\refstepcounter{mytable}}
\def\endmytable{\end{centering}}
\newcounter{myfig}
\def\myfig{\begin{centering}\refstepcounter{myfig}}
\def\endmyfig{\end{centering}}
\newlength{\saveparindent}
\newlength{\saveparskip}
\newcommand{\E}{{\rm I\kern-.3em E}}
\renewcommand{\eqref}[1]{\mbox{Equation~(\ref{#1})}}
\def \part {part}
\renewcommand{\paragraph}[1]{\vspace*{6pt}\noindent\textbf{#1}\;}
\def \blackslug{\hbox{\hskip 1pt \vrule width 4pt height 8pt
    depth 1.5pt \hskip 1pt}}
\def \qed{\quad\blackslug\lower 8.5pt\null\par}
\newcounter{mynote}[section]
\newcommand\ignore[1]{}
\newcounter{rcnote}[section]
\newcounter{mrnote}[section]
\newcounter{fknote}[section]
\newcounter{anote}[section]
\DeclareMathSymbol{\mlq}{\mathord}{operators}{``}
\DeclareMathSymbol{\mrq}{\mathord}{operators}{`'}
\newcommand{\rhf}[2]{R_{f, \gamma}}
\DeclareDocumentCommand{\edist}{o o}{
  \ensuremath{
    \IfNoValueTF{#1}{{d}}{{\sf d}(#1,#2)}
  }
}
\newcommand{\olrk}[1]{\ifx\nursymbol#1\else\!\!\mskip4.5mu plus 0.5mu\left(\mskip0.5mu plus0.5mu #1\mskip1.5mu plus0.5mu \right)\fi}
\NewDocumentCommand{\indseq}{ O{1} O{r} }{{#1}\ldots {#2}}
\definecolor{azure}{rgb}{0.54, 0.17, 0.89}
\newcommand{\randomnessnosp}{Stochastic\ignorespaces}
\newcommand{\basicnosp}{Nakamoto\ignorespaces}
\newcommand{\calR}{{\mathcal R}}
\begin{document}
\title{ Optimal Bootstrapping of  PoW Blockchains}
%
%

\author{Ranvir Rana}
\affiliation{%
  \institution{University of Illinois Urbana Champaign}
  \country{USA}}
\email{rbrana2@illinois.edu}

\author{Dimitris Karakostas}
\affiliation{
  \institution{University of Edinburgh}
  \country{UK}}
\email{d.karakostas@ed.ac.uk}

\author{Sreeram Kannan}
\affiliation{
  \institution{University of Washington Seattle}
  \country{USA}}
\email{ksreeram@uw.edu}

\author{Aggelos Kiayias}
\affiliation{
  \institution{University of Edinburgh and IOHK}
  \country{UK}}
\email{akiayias@inf.ed.ac.uk}

\author{Pramod Viswanath}
\affiliation{
  \institution{University of Illinois Urbana Champaign}
  \country{USA}}
\email{pramodv@illinois.edu}

%
%
%
\begin{abstract}
     Proof of Work (PoW) blockchains are susceptible to adversarial majority mining attacks in the early stages due to incipient participation and corresponding low net hash power.  Bootstrapping ensures safety and liveness during the {\em transient} stage by protecting  against a majority mining attack, allowing a PoW chain to  grow the participation base and corresponding mining hash power. Liveness is especially important since a loss of liveness will lead to loss of honest mining rewards, decreasing honest participation, hence creating an undesired spiral; indeed existing bootstrapping mechanisms offer especially weak liveness guarantees. 
    
     
    In this paper, we propose \protocolnosp, a new bootstrapping  methodology, which achieves two main results: (a) optimal liveness and low latency under a super-majority adversary for the Nakamoto longest chain protocol and (b) immediate black-box generalization to a variety of parallel-chain based scaling architectures, including {\sf OHIE}  \cite{yu2019ohie} and {\sf Prism} \cite{BagariaKTFV19}. We demonstrate via a full-stack implementation the robustness of \protocol under a 90\% adversarial majority.
    
    \keywords{Bootstrapping  \and PoW \and Checkpointing}
\end{abstract}

\begin{CCSXML}
<ccs2012>
<concept>
<concept_id>10002978.10003006.10003013</concept_id>
<concept_desc>Security and privacy~Distributed systems security</concept_desc>
<concept_significance>500</concept_significance>
</concept>
<concept>
<concept_id>10010520.10010575</concept_id>
<concept_desc>Computer systems organization~Dependable and fault-tolerant systems and networks</concept_desc>
<concept_significance>500</concept_significance>
</concept>
<concept>
<concept_id>10003752.10003809.10010172</concept_id>
<concept_desc>Theory of computation~Distributed algorithms</concept_desc>
<concept_significance>500</concept_significance>
</concept>
</ccs2012>
\end{CCSXML}

\ccsdesc[500]{Security and privacy~Distributed systems security}
\ccsdesc[500]{Computer systems organization~Dependable and fault-tolerant systems and networks}
\ccsdesc[500]{Theory of computation~Distributed algorithms}

\maketitle              
\section{Introduction}



{\em Bootstrapping PoW Blockchains}. Proof of Work (PoW) blockchains, epitomized by {\sf Bitcoin}, have proven themselves to be secure designs. Their security has been shown in theory \cite{EC:GarKiaLeo15} as well as in practice ({\sf Bitcoin} has not seen any severe safety or liveness incidents in more than 13 years of being online). However, an important, and less studied, aspect of PoW blockchains is that they are particularly hard to bootstrap. At the early stages of a PoW blockchain, there is not much participation (in terms of mining hash power), making it relatively easy for an adversary to overpower the honest miners.
If a PoW blockchain can avoid the dangers of such attacks in its infancy, eventually a significant amount of honest hashing power participates in the mining process and security is correspondingly strengthened. Thus, the focus of this paper is on principled approaches to secure {\em bootstrapping}, a crucial aspect to the successful development of any PoW blockchain.

\subsection{Related Works}

{\em Checkpointing}. Checkpointing is a standard technique used in state machine replication
protocols~\cite{castro1999practical}, where a centralized server issues checkpoints attesting to the recent state of the protocol. In  blockchains, checkpoints attest to the hash of well-embedded blocks every so often so that new users can securely bootstrap using  a recent execution state of the protocol. A key benefit of such checkpointing is that an adversary even with super-majority mining power cannot create a long-range reversion attack (i.e., forking from a block created long ago). Since such a long-range fork will deviate from the stated checkpoint, clients will reject them. Practical blockchains utilizing checkpointing include 
Bitcoin~\cite{nakamoto2008bitcoin,btc-checkpoints}, 
Peercoin~\cite{king2012ppcoin}, Feathercoin~\cite{feathercoin-checkpoints}, and
RSK~\cite{rsk}. A centralized checkpointing mechanism was maintained by Satoshi Nakamoto themself (presumably honest) until late 2014. Additionally, 
checkpoints of some form have been central in the context of Proof-of-Stake (PoS) protocols, including 
Ouroboros~\cite{C:KRDO17}, Snow White~\cite{FC:DaiPasShi19}, and Ouroboros
Praos~\cite{EC:DGKR18}, as well as \eg in hybrid
consensus~\cite{DBLP:conf/wdag/PassS17}, Thunderella~\cite{DBLP:conf/eurocrypt/PassS18},
ByzCoin~\cite{DBLP:conf/uss/Kokoris-KogiasJ16}, and Algorand~\cite{DBLP:conf/sosp/GiladHMVZ17}.
In a related context, 
Fantomette~\cite{azouvi18} employs distributed checkpoints to secure a
blockDAG-based ledger.  


{\em Finality Gadgets}: 
A crucial problem in the context of checkpointing is ensuring the safe delivery of checkpoints to the blockchain client.
A new generation of blockchain algorithms have sought to decentralize this process by designing a separate distributed consensus protocol to issue checkpoints. We will refer to this class of solutions as finality gadgets, which are comprised of a Byzantine Fault Tolerant (BFT) protocol for finalizing blocks created  by a Proof-of-work (PoW) or Proof-of-stake (PoS) chain protocol. They have become very popular methods for combining the best features of the BFT and PoW protocols and are proposed for deployment in many major blockchains including Ethereum 2.0 \cite{eth2,DBLP:journals/corr/abs-2007-01560}. Depending on the context, the checkpoint committee can be comprised of a fixed committee (for example, run by independent community leaders) or the committee itself can be elected using stake deposits. 

{\em Rationale for Finality Gadgets}. 
There are multiple reasons for utilizing finality gadgets, and different protocols emphasize different properties. We enumerate the properties that motivate the development of finality gadgets as follows. 
\begin{enumerate}
    \item Safety against long-range attacks;
   \item Economic finality;
    \item Availability vs Finality tradeoff capability; 
    \item Responsiveness: low-latency block confirmation; 
\end{enumerate}

The {\em raison d’etre} of finality-gadgets (including centralized checkpoints) is to provide property (1):  safety against long-range attacks by an adversarial majority. In a PoW system, this is an important safeguard 
since an adversary can control a super-majority mining power temporarily, for example, by renting cloud mining equipment. 
Finality gadgets and checkpoints can prevent such attacks. 

Beyond this basic reason, different protocols optimize for different criterion. 
Casper~\cite{buterin2017casper,casper-incentives} focuses on (2) that ensures that if safety is violated, the malicious action is detected and at least one-third of the staking nodes will lose their stake, a similar approach is taken by GRANDPA~\cite{DBLP:journals/corr/abs-2007-01560} and 
Afgjort~\cite{DBLP:conf/scn/Dinsdale-YoungM20}. Recent works \cite{polygraph,BFTforensics} have identified general BFT protocols which have such detectability.
Some protocols like \cite{ebbandflow,sankagiri2020checkpointed} focus on (3) by designing gadgets that let users prioritize adaptivity or availability by implementing different confirmation rules. 
Some protocols like Winkle \cite{winkle} guarantee (1) by utilizing transaction traffic for voting. Finally, other finality gadgets (\eg Afgjort~\cite{DBLP:conf/scn/Dinsdale-YoungM20}) are optimized for property (4): responsiveness, the ability of the BFT to confirm blocks produced by the PoW chain near-instantaneously. 

{\em Bootstrapping gadget:} A bootstrapping gadget is a finality gadget that has one additional property: {\bf Liveness despite adversarial majority}. During the initial stages of a novel PoW protocol, liveness is required to ensure that honest miners are rewarded for their effort even under an adversarial majority. Lack of a live checkpointing protocol creates an undesired spiral: low honest participation $\rightarrow$ low honest miner rewards $\rightarrow$ honest miners leaving (lower participation).

\subsection{Motivation and contributions}

{\em Key shortcoming of existing solutions}. While all the aforementioned protocols satisfy several properties of finality gadgets, none of them can work as bootstrapping gadgets. This was observed in a recent paper \cite{cryptoeprint:2020:173} for even the simple centralized checkpointing protocol. An adversary controlling a majority mining power can issue a long private adversarial chain, which does not contain any honest transactions. When such a chain is checkpointed repeatedly, honest clients lose liveness in the system. The paper proposed the inclusion of a random nonce as well as a checkpoint certificate (issued by the central checkpointer or BFT) to reduce the impact of this attack. The key idea is that since this random nonce needs to be included in the next block, this creates a renewal event where the adversarial blocks stored prior to the event have to be disregarded, creating a new race between the honest and adversarial chains at each checkpoint. While this approach can ensure a non-zero chance that the honest chain can win, thus giving asymptotic liveness, the latency of transaction inclusion as well as the chain quality (fraction of honest blocks in the final ledger) and the corresponding mining rewards for honest miners decrease exponentially as mining power increases beyond $50 \%$ or as the inter-checkpoint interval increases.

{\em Main Contributions}. In this paper, we focus on building a bootstrapping gadget that achieves safety {\em and} liveness under an adversarial majority. We propose \protocolnosp, a new scheme that achieves optimal chain quality.
The core idea is the inclusion of appropriate reference links to checkpoint blocks. Variations of this idea have been proposed in different contexts in the literature: in \cite{EC:GarKiaLeo15} for achieving a $1/2$ threshold Byzantine Agreement, in Fruitchains \cite{fruitchains} for designing incentives, in inclusive protocols \cite{inclusive} for minimizing block wastage due to forking and in general DAG (directed acyclic graph) protocols, such as Conflux \cite{conflux}, for improving throughput. We prove that \protocol achieves optimal chain quality while ensuring transaction inclusion for all honest transactions within two epochs even under an arbitrarily high adversarial mining power (the so-called ``99\% mining adversary''). The plots for chain quality of related works are in Figure \ref{fig:cq_comparison}; an upper bound on the chain quality of \cite{cryptoeprint:2020:173} diminishes rapidly with adversarial power ($\beta$) and epoch size ($e$), whereas \protocol achieves the {\em optimal chain quality} equal to the honest mining power ($1-\beta$). 


\begin{figure}
    \centering
    \includegraphics[width = .35\textwidth]{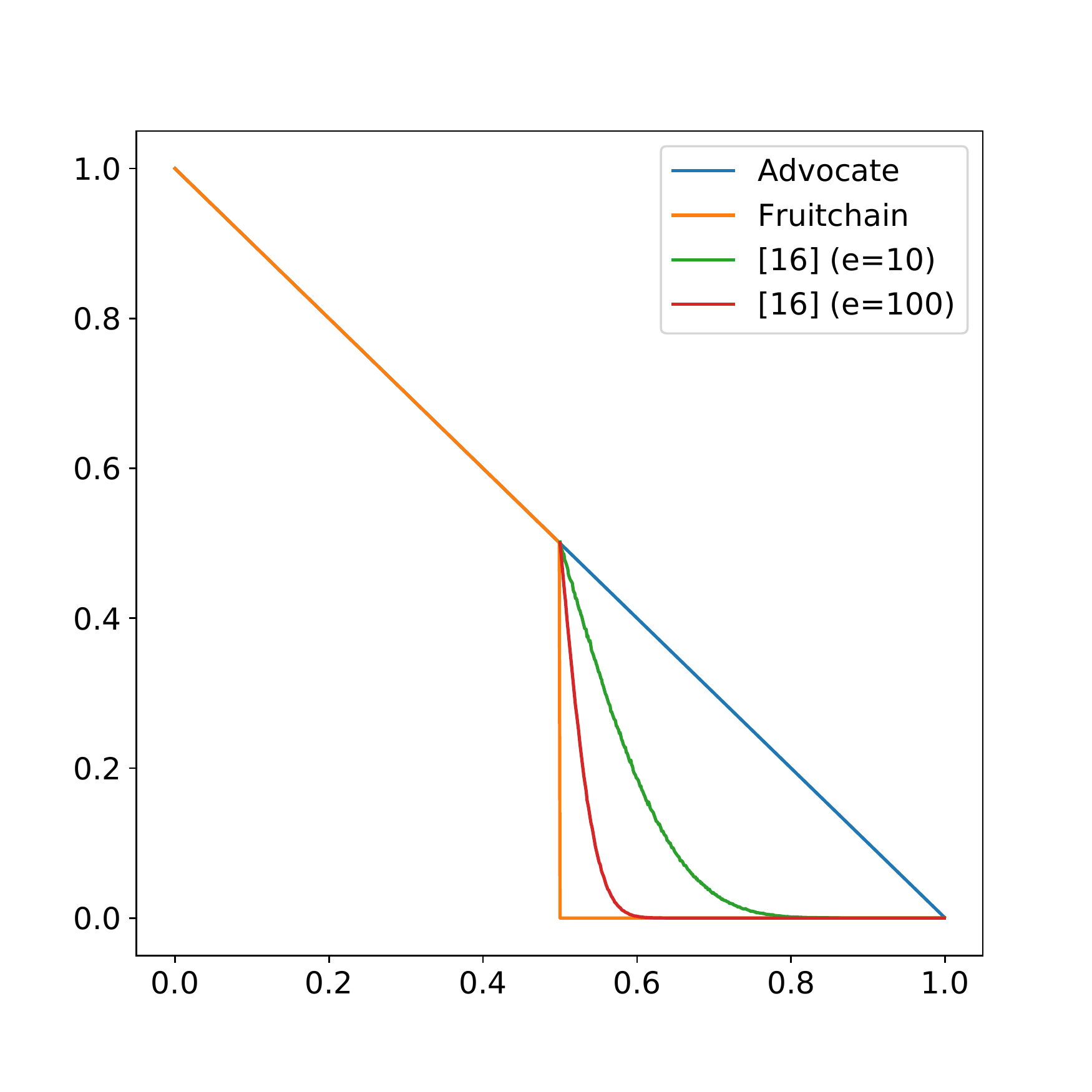}
    \put(-175,60){\rotatebox{90}{Chain quality}}
    \put(-140,5){Adversary mining power: $\beta$}
    \put(-100,110){$\beta=0.5$}
    \caption{Chain quality (CQ) of fruitchains deteriorates to 0 when $\beta>0.5$; an upper bound on CQ of previous work \cite{cryptoeprint:2020:173} deteriorates rapidly with epoch length and $\beta$; CQ of \protocol is optimal.}
    \label{fig:cq_comparison}
\end{figure}

{\em General applicability}. We create appropriate blackbox interfaces through which our protocol can employ any BFT protocol for checkpointing, to make our construction widely applicable. We also demonstrate that our bootstrapping gadget \protocol work with a variety of PoW protocols beyond Nakamoto consensus. 

{\em System Implementation}. We perform extensive experiments on a distributed testbed to demonstrate the robustness of our protocol under and up to  90\% adversarial mining majority and compare performance gains with prior state-of-the-art. To demonstrate compatibility of \protocol with high throughput parallel-chain architectures, we implemented \protocol on {\sf Prism} and demonstrated an honest throughput of 8,200 tx/s under a  70\% adversarial majority. The code for the systems implementation is available at \cite{advocate_sys_anon}.

{\bf Organization}. The rest of the paper is organized as follows. Section \ref{sec:preliminaries} provides an overview of the preliminaries used in our work including the threat model and the distributed ledger’s properties and block production mechanisms. Section \ref{sec:ideal} describes \protocol under a single checkpointing node and provides its security analysis to show safety and liveness with optimal chain quality. Section \ref{sec:bft} extends \protocol to ensure similar performance and security guarantees under a committee-based BFT-SMR protocol by providing a unified network functionality. Section \ref{sec:prism} integrates \protocol into parallel-chain architectures by providing a {\em meta-protocol} that can be readily integrated into {\sf Prism}, {\sf OHIE} and ledger-combiner to achieve high throughput under an adversarial majority. 
Section \ref{sec:evaluation} presents detailed experimental results on a distributed testbed of a full-stack implementation, stress testing performance with majority adversarial mining power.

\section{Preliminaries}\label{sec:preliminaries}


\subsection{The Distributed Ledger Model}\label{subsec:ledger-model}

The distributed ledgers analyzed in this work are constructed as blockchains. A
ledger is formed as a hash chain (or tree) of blocks, each block containing
transactions which alter the ledger's state. New blocks, which extend the
chain, are created by mining parties at regular intervals. Conflicts, \ie forks in the
chain, are resolved following Sybil resilience mechanisms, such as
PoW. Given a tree of blocks, each party chooses a 
single branch as the \emph{main chain}; blocks that are not part of the main 
chain are called \emph{uncles}.

We assume a synchronous setting with a delay upper bound of $\Delta$. Specifically, the execution proceeds in rounds. On each round, every party is
activated to participate in the protocol. Communication is performed via a
``diffuse'' functionality, \ie a gossip protocol, such that no point-to-point
communication channels exist, but rather a peer-to-peer network is formed.
Therefore, every message produced at round $r$ is received by all other parties
by round $r + 1$. We also assume that the number of participating parties is
fixed for the duration of the execution.

The ledger's core properties, described in detail by the Bitcoin Backbone
model~\cite{EC:GarKiaLeo15}, are provided in Definitions~\ref{def:stable}-\ref{def:liveness}.

\begin{definition}[Stable Block and Transaction]\label{def:stable}
    A block is \emph{stable} if it is $\persistenceParam$-deep in the main chain.
    A transaction published in a stable block is also stable.
\end{definition}

\begin{definition}[Safety]\label{def:safety}
    A transaction reported as \emph{stable} by an honest party on round $r$ is
    reported as stable by all honest parties on round $r+1$, at the same position in the
    ledger.
\end{definition}

\begin{definition}[Liveness]\label{def:liveness}
    A transaction which is provided continuously as input to the parties
    becomes stable after $\livenessParam$ rounds.
\end{definition}

An additional important property  of interest is {\em chain quality} 
(Definition~\ref{def:chain-quality})~\cite{EC:GarKiaLeo15,LC:KiaPan17}.   
Briefly, this property ensures that the number of blocks that each party contributes 
to the chain is bounded by a function of the party's mining power $\miningpower$.

\begin{definition}[Chain Quality $(q,l)$]\label{def:chain-quality}
    Let $q$ be the proportional mining power of $\adversary$.
    Chain quality with parameter $l$ states that for any honest party $\party$ with 
    chain $\chain$, it holds that, for any $l$ consecutive blocks of $\chain$, the 
    ratio of adversarial blocks is at most $1-q$.
\end{definition}

\subsection{Threat Model}\label{subsec:threat-model}

Our work considers polynomial-time executions, such that all parties, including
the adversary $\adversary$, are locally polynomial-bounded. On each execution
round, the adversary may ``corrupt'' a party, at which point it accesses the
party's internal state; following, when the corrupted party is supposed to be
activated, the adversary is activated instead. Additionally, $\adversary$ is
``adaptive'', \ie corrupts parties on the fly, and ``rushing'', \ie retrieves
all honest parties' messages before deciding its strategy at each round.

$\adversary$ controls $\mu_\adversary$ of the network's mining power and tries to break safety and liveness. 
To break safety, $\adversary$
forces two non-corrupted nodes to accept different chains as stable, \ie to
report different transactions as stable in the same position in their
respective ledger. To break liveness, $\adversary$ attempts to prevent a
transaction from becoming stable within $u$ rounds.
We explore settings where the honest majority assumption is violated, \ie when
the adversary may control more than $1/2$ of the net mining power. In those
settings, the ledger cannot be secure in a standalone fashion, hence the
need for the checkpointing protocols presented in this work.
\section{\protocolnosp: Optimal Checkpointing of  Longest Chain Protocols}
 \label{sec:ideal}
 Our main contribution is a novel protocol, \protocolnosp, that ensures both safety and liveness against a (arbitrarily high)  super-majority mining adversary on the PoW chain. This section considers a single (honest) checkpointing node, in order to clearly present the main innovations, while the following sections relax this assumption by proposing a distributed checkpointing federation. 
 
 

Checkpointing in \protocol is performed via \emph{certificates}. Specifically, at regular intervals of $e$ blocks on the main chain, the checkpointing service issues a signed certificate, which is published on the chain within $c$ blocks on the main chain (PoW chain). The certificate defines the canonical chain that parties should adopt. \protocol is parameterized by two values $c$ and $e$ as described above.

\subsection{Checkpointing Party Behavior}\label{subsec:single-node-behavior}
The checkpointing party is connected to the blockchain network, so at each round $t$ it holds a view of the PoW chain. Therefore, on any round, the checkpointing party maintains a list of leaves $\mathcal{L}(t)$ of its local block-tree. 

At regular intervals (\ie every $e$ blocks), the party issues a checkpoint certificate. The $i$-th checkpoint certificate issued by the party is denoted $C_i$. 
A checkpoint certificate is constructed as follows: $C_i = \{B_i, \mathcal{R}_i, S_i\}$; $B_i$ is the checkpointed block, \ie the block of the main PoW chain that the party checkpoints; $\mathcal{R}_i$ is a list of references of blocks that are not part of the main chain, \ie leaves of the block tree which are not checkpointed; $S_i$ is the signature of the certificate. The initial, bootstrapping certificate is $C_0 = \{0, \{0\}, S_0\}$. For the rest of the paper, a \emph{checkpointed} block is a block which is referenced in a checkpointing certificate.

 


\subsection{Main Chain behavior}
 
With the introduction of checkpoints, the PoW node behavior needs to change appropriately. The {\em key change} is that, once a new certificate checkpoints block $B_i$, it should be published in at least one of the $c$ blocks that immediately extend $B_i$; the first block that includes the certificate is called the \emph{referring block}. The nodes follow the \emph{longest checkpointed} chain. In summary, \protocol modifies the main-chain rule as follows: 
\begin{itemize}
    \item Go to $B_i$ in the blocktree. 
    \item If there exists a descendant block $B_i^r$ within $c$ blocks of $B_i$ that contains $C_i$, pick the longest chain which contains $B_i^r$ as the main chain. \emph{(Note: A block $B_i^r$ which contains $C_i$ but is more than $c$ blocks after $B_i$ is not acceptable.)} 
   \item If no such $B_i^r$ exists, then either of the following holds:
   \begin{enumerate}
       \item $B_i$ is \emph{not} $c$-deep in the longest chain: pick the longest chain containing $B_i$ as the main-chain.  
       \item $B_i$ is $c$-deep in the longest chain: pick one of the chains which is $(c-1)$-deep and contains $B_i$ as the main chain (breaking ties arbitrarily).
   \end{enumerate}
\end{itemize}
 
\paragraph{Mining behavior}\label{subsec:advocate-miner}
Miners follow the above main chain rules. Additionally, \wrt a checkpoint certificate $C_i$, two cases exist: 
\begin{enumerate}
    \item the main chain contains $C_i$ in some block $B_i^r$: proceed mining as usual.
    \item the main chain does not contain $C_i$: include $C_i$ alongside the list of transaction to be mined.
\end{enumerate}

We suppose that when a miner creates a new block, the block contains all transactions in the miner's mempool (in practice, this requires sufficiently incentivized transactions fees). 
 With hindsight, this assumption will prove useful to argue that a transaction is published in the first honestly-generated block that is produced after the transaction's creation.
 
Let main chain oracle $F_{mco}$ represent the view in the execution of the underlying consensus algorithm (Nakamoto) and  the public global tree $G_t$ at time $t$ is received by time $t+\Delta$ (by the synchronous network assumption). The main chain oracle gets the additional checkpointing information from  \protocol. The interaction of a main chain oracle ($F_{mco}$) with the \protocol functionality can be formalized using a functionality $F_{\protocolnosp}$ described next. 

\vspace{0.3cm}
\begin{mdframed}[backgroundcolor=black!10,rightline=false,leftline=false]

\begin{center}
    {\bf Advocate $F_{Advocate}$}
\end{center}

    \vspace{0.1cm}

$F_{\protocol}$ and $F_{mco}$ interact using various push-pull messages as described bellow; message delay is considered zero since both the modules use the same machine
  \begin{enumerate}
        \item {\em potentialCandidate:} $F_{mco}$ sends this message as soon as it receives a new block $B_n$ that satisfies the checkpoint criteria
        \item {\em candidateFinalized:} $F_{\protocol}$ sends this message as soon as it receives a new checkpointing certificate $\mathcal{C}_n$
        \item {\em sendReferences:} On receiving a 
        a potential candidate, $F_{\protocol}$ immediately checks with the $F_{mco}$ to see if there are any unreferred uncle blocks; $F_{mco}$ replies immediately with {\em unreferredBlocks}
        \item {\em isCertValid:} On receiving a checkpoint certificate $\mathcal{C}_n$ on the main chain, $F_{mco}$ immediately requests $F_{\protocol}$ to check if it's valid (correct signatures, etc.); $F_{\protocol}$ immediately replies in boolean using {\em certValidity}.
        \item {\em isValidBlock}: Triggered when $F_{\protocol}$ receives a new checkpoint certificate; it sends the above message with block hashes to the main chain oracle, to see if those blocks are valid. The main chain oracle replies with {\em validBlock}
        \item {\em validBlock:} Response to the above; if $F_{mco}$ has not received the block yet, it waits for $\Delta$ before the reply. If the block is not yet received or is invalid due to main chain consensus protocol, it replies {\em False}, else it replies {\em True}.
        \item {\em certRequest} A query to the certificate database stored by $F_{\protocolnosp}$, the response {\em requestedCert} is immediate.
    \end{enumerate}

\end{mdframed}
\vspace{0.3cm}

Note that all the interactions between $F_{advocate}$ and $F_{mco}$ are immediate except for {\em isValidBlock} which has a maximum delay of $\Delta$; the synchronous network delay bound. 

The checkpointing party maintains the same functionalities $F_{mco}$ and $F_{Advocate}$ with an additional checkpointing service functionality $F_{cps}$.

\vspace{0.3cm}
\begin{mdframed}[backgroundcolor=black!10,rightline=false,leftline=false]

\begin{center}
    {\bf Checkpointing module $F_{cps}$}
\end{center}

    \vspace{0.1cm}
    $F_{cps}$ receives the {\em checkpointCandidate} and {\em unreferredBlocks} in the form of {\em inputValue} message from $F_{Advocate}$. This message takes zero delay once $F_{Advocate}$ receives {\em potentialCandidate} from $F_{mco}$. $F_{cps}$ queries $F_{Advocate}$ regarding the validity of this input using the {\em isInputValid} and {\em inputValidity} messages and gets a reply after a delay $t_{cps}$. If the input is valid, $F_{cps}$ immediately certifies the input and sends the message to $F_{Advocate}$.
\end{mdframed}
\vspace{0.3cm}

Note that the above process will happen within a time $t_{cps}$ with $t_{cps}/\Delta \ll  1$, since the process is in the same machine. The message {\em isInputValid} and {\em inputValidity} seems redundant for now, however, we will see in section \ref{sec:bft}, that this message classification is critical.  

\subsection{Decoupled Validity: Ledger Creation}\label{subsec:decoupled-validity}

Without loss of generality, we assume that the execution completes with the issuing of a final checkpoint. To construct the aggregate ledger at any point of the execution, the blocks of the main chain are concatenated with the blocks of the non-main branches. In the aggregate ordering, the main chain blocks, up to and including the referring block (\ie the block which includes the checkpointing certificate), are prioritized over the blocks which are not part of the main chain (\ie the ``uncle'' blocks). 

Formally, let $T(\mathcal{L}_i)$ be the tree corresponding to the leaves $\mathcal{L}_i$. Let the forest $F_i$ be the difference between $T(\mathcal{L}_i)$ and the previously checkpointed tree: $F_i := T(\mathcal{L}_i) \setminus \{ T(\mathcal{L}_{i-1}) \cup \text{Chain}_i \}$, where $\text{Chain}_i$ is the main chain up to (and including) the referring block for checkpoint $C_i$. $\pi(\cdot)$ denotes a topological sort of the blocks in a forest, with ties broken in a universal manner (\eg via block hashes). The aggregate ledger is constructed by concatenating the referring block (\ie $\text{Chain}_i$) with $\pi(F_i)$. Therefore, when the referring block for certificate $C_i$ is read, the blocks referred by $C_i$ (\ie the blocks in $\pi(F_i)$) are also read in the order defined in $C_i$.\footnote{We assume that the (honest) checkpointing node follows the universal topological sorting $\pi$ when constructing the reference list of certificate $C_i$.} This procedure is exemplified in figure \ref{fig:protocol} where Blocks 3,5 are included after block 8. Note that we can follow other universal ordering approaches for blocks in $F_i$ (e.g. sort by hash) without affecting the security of our protocol. 

Figure~\ref{fig:protocol} depicts the ledger construction, \st the sanitized ledger (figure \ref{fig:sanitization}) is obtained by parsing the main chain and referenced blocks and removing invalid (\eg double spending) entries. Such post ordering ledger sanitization is required even under a honest setting since referred uncle blocks may have transactions that conflict with transactions on the main chain. 


\begin{figure}
    \centering
    \includegraphics[width = \columnwidth]{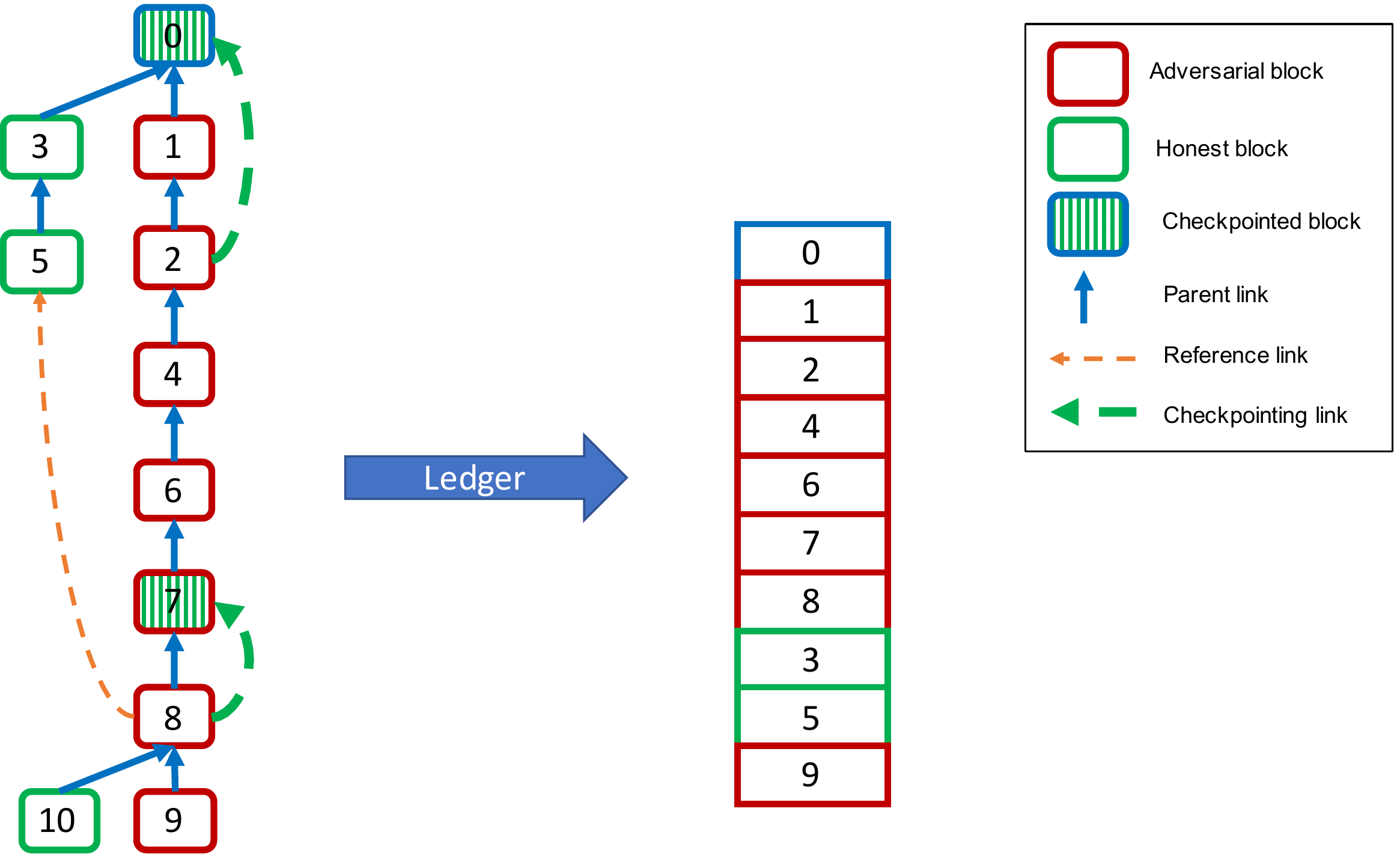}
    \caption{\protocol~ protocol: 30\% honest mining power ensures 30\% of the blocks in the ledger are mined by honest miners.}
    \label{fig:protocol}
\end{figure}

\begin{figure}
    \centering
    \includegraphics[width = .9\columnwidth]{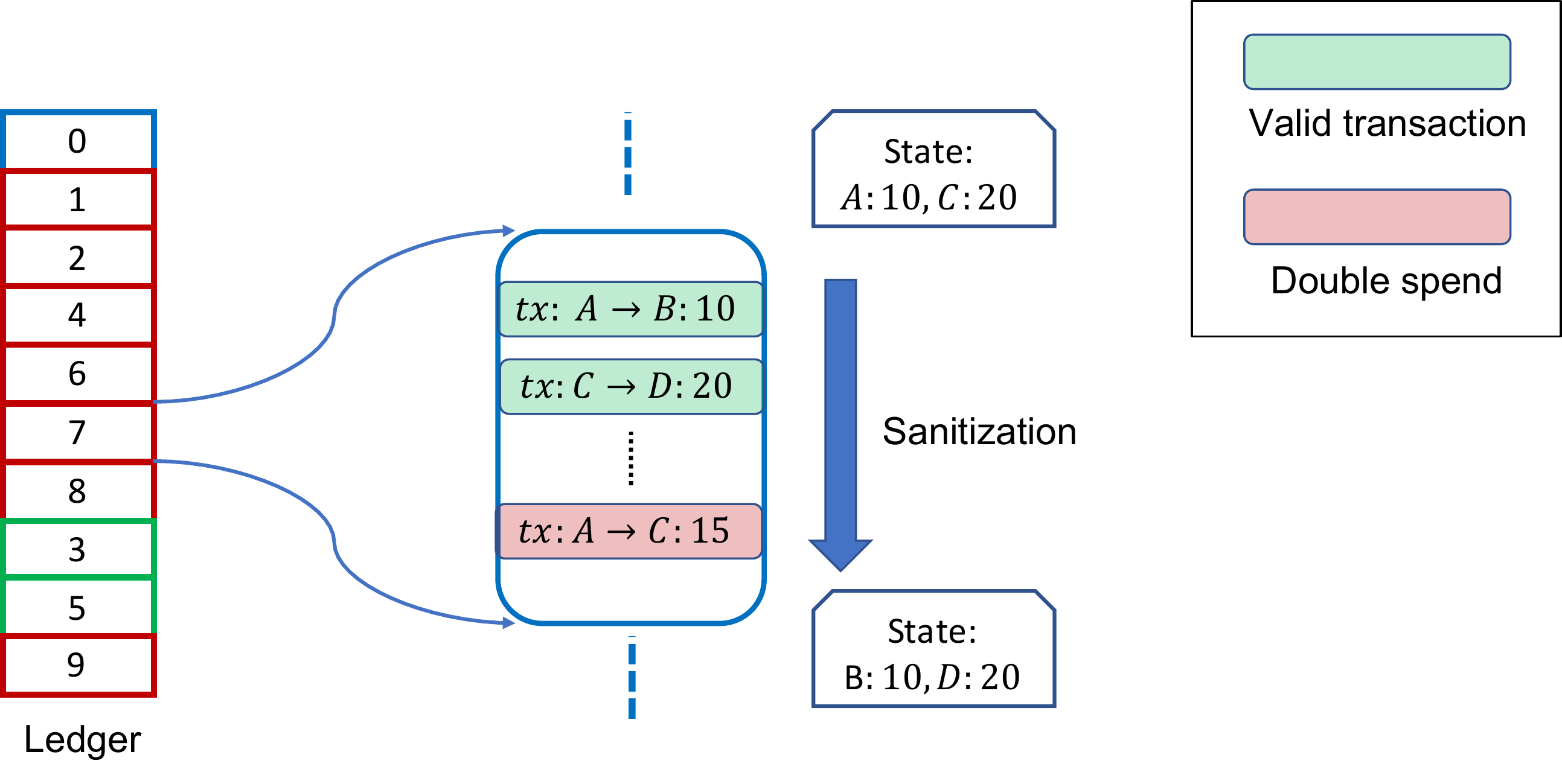}
    \caption{Ledger sanitization ignores/removes invalid transactions post ordering.}
    \label{fig:sanitization}
\end{figure}

\subsection{Security Properties of \protocol}
In this section we show that \protocolnosp, under the centralized setting of a single honest checkpointing party, satisfies a host of desirable properties. First and foremost, Theorems~\ref{thm:advocate-safety} and~\ref{thm:advocate-liveness} prove that \protocol satisfies safety and liveness (cf.\  Section~\ref{subsec:ledger-model}).

\begin{theorem}[Safety]\label{thm:advocate-safety}
    Let $B$ be a block which is checkpointed by \protocol via certificate $C$. If $B$ is part of the main chain, then $B$ is stable (cf.\ Definition~\ref{def:stable}). If $B$ is an uncle, then $B$ is stable if $C$ is published in a block which is $k$-deep, with $k = e-c$.
\end{theorem}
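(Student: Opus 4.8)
The plan is to reduce both halves of the theorem to a single \emph{pinning} fact and then read off safety from it. The pinning lemma states: once the honest checkpointing party issues the certificate $C_i$ for a block $B_i$, every honest party's main chain contains $B_i$ from the next round onward and never reverts it. To prove it, let $t_i$ be the round on which the honest checkpointer issues $C_i$. At $t_i$ the block $B_i$ lies on the checkpointer's own longest-checkpointed chain, so $B_i$ exists and lies on some chain, and by the $\Delta$-synchrony bound every honest party holds $C_i$ by round $t_i+1$. I would then walk through the three sub-cases of the longest-checkpointed-chain rule of Section~\ref{sec:ideal} and observe that each selects a chain containing $B_i$: either a referring block $B_i^r$ within $c$ blocks of $B_i$ exists and the rule follows a chain through $B_i^r$ (hence through $B_i$); or none does, and the rule follows the longest chain through $B_i$ (when $B_i$ is not $c$-deep) or a $(c-1)$-deep chain through $B_i$. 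Since this holds for every honest party and every later round, $B_i$ is never abandoned; the argument uses that the single honest checkpointer issues exactly one certificate per epoch, so there are no competing certificates.

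The lemma as stated only constrains honest chains to \emph{contain} $B_i$; I also need that a referring block carrying $C_i$ is in fact produced within the $c$-block window and itself becomes permanent. By the mining rule an honest miner that mines while $C_i$ is not yet on its main chain includes $C_i$ in the block it produces, and the ``stall at depth $c-1$'' sub-case guarantees honest mining effort extends $B_i$ rather than an adversarial fork; so some block at depth at most $c$ past $B_i$ carries $C_i$, and it is pinned by the lemma once a later checkpointed block descending from it is issued. \textbf{The main obstacle} is justifying the $c$-block window against a super-majority adversary, which might try to bury $B_i$ under $c$ of its own $C_i$-free blocks before $C_i$ has propagated: this is excluded by timing, since $C_i$ is issued essentially as soon as $B_i$ reaches the checkpoint height, so $B_i$ has no descendants at $t_i$ and the checkpoint is enforced by $t_i+1$, whereas producing $c$ blocks needs $c$ successful proof-of-work queries; a union bound over the polynomially many rounds, in the style of the backbone analysis, makes this precise.

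Given the pinning lemma, the two cases follow. If the checkpointed block $B$ lies on the main chain, then $B=B_i$ for some $i$; the pinning lemma keeps $B$ on every honest main chain forever, and since the subsequent checkpoint forces the chain to grow by at least $e$ blocks past $B$ within an epoch, $B$ attains and retains depth $k$, i.e.\ it is stable. If $B$ is an uncle referenced by $C=C_i$ carried in a block $B^r$ that is $(e-c)$-deep on the main chain, I would first observe that the depth $e-c$ is exactly the threshold at which, combining the $e$-block checkpoint spacing with the $c$-block inclusion window, $B^r$ is guaranteed to lie below a subsequently checkpointed block $B_{i+1}$; the pinning lemma then makes $B_{i+1}$, and hence its ancestor $B^r$, permanent. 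Finally, by the ledger-construction rule of Section~\ref{subsec:decoupled-validity}, reading $B^r$ forces the forest $\pi(F_i)$ — which contains $B$ — to be read immediately afterwards in the order fixed by $C_i$; since $B^r$ occupies a permanent position, so does $B$, so $B$ is stable. The delicate part of this last step is the height bookkeeping that pins down the constant $e-c$ (relating the heights of $B_i$, $B^r$, $B_{i+1}$ and its referring block), which I would carry out explicitly.
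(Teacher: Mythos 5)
Your overall skeleton---pin the checkpointed block $B_i$ via a case analysis of the longest-checkpointed-chain rule, conclude that main-chain blocks have their ledger position fixed as soon as $C_i$ exists, and finalize uncle positions only once the next certificate $C_{i+1}$ pins a particular referring block (which is what the $(e-c)$-depth condition encodes)---is essentially the paper's argument. The genuine problem is the step you single out as ``the main obstacle'' and your resolution of it. You claim that the adversary burying $B_i$ under $c$ certificate-free blocks is ``excluded by timing,'' because $B_i$ has no descendants at $t_i$ and producing $c$ blocks requires $c$ proof-of-work successes, to be made precise by a union bound in the style of the backbone analysis. That argument fails in this paper's threat model: the adversarial mining fraction may be arbitrarily close to $1$, so no counting or union-bound argument can prevent the adversary from producing $c$ blocks before any honest block appears; worse, the premise is false, since a private-mining adversary can release $B_i$ together with $c$ pre-mined, certificate-free descendants at the very moment the checkpoint height is reached, so $B_i$ may already be $c$-deep at round $t_i$.

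The theorem survives because no such probabilistic argument is needed. The chain-selection rule handles this situation deterministically: a chain in which $B_i$ is $c$-deep without a referring block is simply not followed past depth $c-1$ (sub-case 2 of the rule), and a block containing $C_i$ more than $c$ blocks after $B_i$ is invalid, so $B_i$ stays pinned no matter how fast the adversary mines. Whether a valid referring block is actually produced within the $c$-block window is a liveness matter (honest miners include $C_i$ the next time they mine), not a safety one; and for the uncle case of this theorem you need not prove it at all, because the hypothesis already assumes $C$ is published in a block that is $(e-c)$-deep. Dropping the timing/union-bound step and arguing the uncle case exactly as in your final paragraph---the referring block is an ancestor of the next checkpointed block, hence pinned once $C_{i+1}$ is issued, and the ledger-construction rule then fixes the positions of the referenced uncles---recovers the paper's proof.
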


\begin{proof}
    Let $C_i = \{B_i, \mathcal{R}_i, S_i\}$ be the $i$-th certificate and $B$ a block checkpointed by $C_i$. By definition of the protocol, all honest parties eventually accept a chain which contains a referring block $B_C$ which contains $C_i$.
    Observe that, once $C_i$ is created, the ledger position of $B_i$ is finalized, given the ledger construction description in Section~\ref{subsec:decoupled-validity}. Therefore, if $B$ is part of the main chain, its ledger position is also fixed as soon as $C_i$ is created.
    If $B$ is an uncle, then its ledger position depends on the referring block $B_C$. Specifically, the ledger construction rules enforce that the uncle blocks which are checkpointed by $C_i$ are appended in the final ledger \emph{after} the referring block for $C_i$. However, a referring block \emph{can} be reverted, if a chain appears which is both valid (\ie contains a correct referring block) and long enough. Therefore, the position of uncle blocks, which are checkpointed by $C_i$, is finalized only when the certificate $C_{i+1}$ is issued, which occurs after at most $e-c$ main chain blocks.
    \qedsymbol
\end{proof}

\begin{theorem}[Liveness]\label{thm:advocate-liveness}
    Let $h$ be the probability that at least one honest block is created per round; \protocol satisfies liveness (cf. Definition~\ref{def:liveness}) with parameter $\livenessParam = \lceil \frac{2}{h} \rceil \cdot e$.
\end{theorem}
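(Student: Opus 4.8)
The plan is to show that, for any transaction continuously provided as input, within the claimed $\lceil 2/h\rceil \cdot e$ rounds there is (i) at least one honestly-generated block that includes the transaction and (ii) that block ends up in a checkpointed position in the aggregate ledger, hence becomes stable. The core structure of the argument is a two-part decomposition: first, an \emph{epoch-race} lemma showing that over a window of $\lceil 2/h\rceil$ epochs the honest checkpointing party will, with certainty (not merely asymptotically), checkpoint at least one honest block or reference it as an uncle; second, a \emph{finalization} step invoking Theorem~\ref{thm:advocate-safety} to conclude that once such a block is checkpointed — whether on the main chain or as an uncle — it is stable.

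First I would set up the per-round honest-block event. By hypothesis, in each round the probability that some honest party mines a block is $h$; more importantly, I want a \emph{deterministic} guarantee, so I would argue as follows. Because the miner's block contains the entire mempool (the stated assumption in Section~\ref{subsec:advocate-miner}), the first honestly-generated block produced after the transaction's creation includes it. So it suffices to show an honest block is produced and lands in the ledger within the window. The quantity $\lceil 2/h\rceil$ is exactly the expected-type bound converting "an honest block appears per round with probability $h$" into "within $\lceil 2/h\rceil$ rounds an honest block appears and, crucially, is captured by the next certificate." I would make this precise by noting that within any $e$ consecutive main-chain blocks a new certificate $C_{i+1}$ is issued, and the honest checkpointing party, upon issuing $C_{i+1}$, either checkpoints an honest block on the main chain or — if the adversary has temporarily suppressed honest blocks from the main chain — includes every honest uncle leaf in the reference list $\mathcal{R}_{i+1}$. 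Either way the honest block (carrying our transaction) becomes \emph{checkpointed}. The factor of $2$ absorbs two sources of slack: the up-to-$\Delta$ and $t_{cps}$ delays in certificate issuance/propagation, and the possibility that the honest block is produced slightly after the current epoch boundary so it is only caught by the following certificate.

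Then I would invoke Theorem~\ref{thm:advocate-safety}: a checkpointed block on the main chain is immediately stable; a checkpointed uncle is stable once its certificate $C$ is published in a block that is $(e-c)$-deep, which happens within the next epoch, i.e.\ within a further $e$ blocks. Combining, the worst case is: wait up to $\lceil 2/h\rceil$ rounds for the honest block to appear and be swept into a certificate (this consumes the epochs), then the certificate's referring block must itself become deep enough — but this is already accounted for within the $\lceil 2/h\rceil \cdot e$ block/round budget because each epoch is $e$ blocks and the honest block-production rate lower-bounds the chain growth so that $\lceil 2/h \rceil$ epochs elapse in the stated number of rounds. I would conclude by translating block counts to round counts using the network assumption that honest blocks propagate within $\Delta$ and the honest block-arrival rate $h$, giving liveness parameter $\livenessParam = \lceil \frac{2}{h}\rceil \cdot e$.

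The main obstacle I expect is the \emph{uncle-capture} argument: I must rule out the adversary perpetually keeping the honest block out of \emph{both} the main chain and the reference list. This requires carefully using the main-chain rule (honest parties follow the longest \emph{checkpointed} chain) together with the checkpointing party's behavior of referencing \emph{all} non-main-chain leaves of its local block-tree in $\mathcal{R}_i$ — so an honestly-mined block is, at worst, a leaf the honest checkpointer sees within $\Delta$ and therefore references in the very next certificate. The subtlety is synchronization: the honest block must exist in the checkpointer's view \emph{before} it issues $C_{i+1}$; handling the boundary case where the block is mined in the last round before the certificate is issued is exactly what forces the factor $2$ rather than $1$. I would need to argue that over two epochs there is at least one epoch whose \emph{entire} duration follows the transaction's appearance, so the honest block generated in that epoch is unambiguously in the checkpointer's view when the epoch-ending certificate is formed.
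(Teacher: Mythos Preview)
Your overall approach matches the paper's: the transaction enters the first honest block via the mempool assumption, that block is checkpointed (either on the main chain or as a referenced uncle), and Theorem~\ref{thm:advocate-safety} then yields stability. The paper's proof is exactly this two-case argument followed by a round-count via worst-case chain growth.

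However, your accounting for the factor $2$ is muddled and in places wrong. The paper's reason is clean: if the honest block $B$ is an uncle, Theorem~\ref{thm:advocate-safety} says it becomes stable only after the \emph{next} checkpoint (the one following the certificate that references $B$) is issued, so in the worst case you wait for two checkpoints after $B$'s creation; each checkpoint is $e$ main-chain blocks; worst-case chain growth occurs when the adversary abstains, giving rate $h$ blocks per round; hence two checkpoints take on expectation $\lceil 2/h\rceil \cdot e$ rounds. You instead first attribute the $2$ to ``delays'' and ``boundary cases,'' which is not the source, and you repeatedly write ``$\lceil 2/h\rceil$ epochs'' where the correct count is simply $2$ epochs (equivalently $\lceil 2/h\rceil \cdot e$ rounds). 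These are not the same thing: for small $h$ your phrasing would mean hundreds of epochs. Also, the bound is explicitly \emph{in expectation} in the paper --- your claim of a guarantee ``with certainty (not merely asymptotically)'' cannot hold with random mining.

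Finally, the ``uncle-capture'' obstacle you flag is a non-issue and the paper does not treat it as one: the single honest checkpointer receives every honest block within $\Delta$ and, by construction, references \emph{all} non-main-chain leaves in $\mathcal{R}_i$, so an honest block mined before the certificate is formed is automatically captured. No separate epoch-race lemma is needed.
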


\begin{proof}
    The proof follows directly from the ledger construction (cf. Section~\ref{subsec:ledger-model}) and Theorem~\ref{thm:advocate-safety}. Specifically, let $t$ be the round on when block $B$ is created. 
    If $B$ is a main chain block, then it becomes stable with the issuing of the first checkpoint after $t$ which, by definition of the checkpointing behavior (Section~\ref{subsec:single-node-behavior}), occurs at most $\frac{e}{h}$ rounds after $t$.
    If $B$ is an uncle block then, as shown in Theorem~\ref{thm:advocate-safety}, it becomes stable with the issuing of the first checkpoint \emph{after} $B$ becomes checkpointed; in other words, $B$ becomes stable when $2$ checkpoints are issued after it is created. 
    However, the chain growth depends at worst on the honest miners' mining power (\eg if the adversary abstains), therefore two checkpoints are issued on expectation at most $\lceil \frac{2}{h} \rceil \cdot e$ rounds after $t$. 
    Finally, as mentioned in Section~\ref{subsec:advocate-miner}, a transaction is published in the first honestly-generated block which is produced after the transaction's creation. In turn, this block is checkpointed, either as part of the main chain or as an uncle block, by the upcoming checkpoint. 
    \qedsymbol
\end{proof}

The next property that we explore is chain quality (cf.\  Definition~\ref{def:chain-quality}). First, we observe that \protocol cannot guarantee chain quality over any fixed window of $l$ consecutive blocks of the final ledger. Briefly, the adversary can produce blocks in private and release them, such that the checkpoint certificate refers to all of them at once, hence temporarily flooding the ledger with adversarial blocks. However, as Theorem~\ref{thm:advocate-cq} shows, \protocol \emph{does} guarantee chain quality over the entire ledger. This is a direct improvement of the checkpoint protocol in~\cite{cryptoeprint:2020:173}, which guarantees safety and liveness but not chain quality.

\begin{theorem}[Chain Quality]\label{thm:advocate-cq}
    Let $\beta$ be the adversarial power. For every execution, during which $l$ blocks are created in aggregate by all parties, \protocol satisfies chain quality (cf.\ Definition~\ref{def:chain-quality}) with parameters $l,(1-\beta)$.
\end{theorem}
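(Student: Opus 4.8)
The plan is to count, over the entire execution, the number of honest blocks that end up in the final aggregate ledger and show this is at least a $(1-\beta)$ fraction of all $l$ blocks. The key structural fact to exploit is the ledger-construction rule from Section~\ref{subsec:decoupled-validity}: every block that is checkpointed — whether it sits on the main chain or is an uncle referenced by a certificate's reference list $\mathcal{R}_i$ — eventually appears in the aggregate ledger, and conversely only checkpointed blocks appear. So the question reduces to: of the $l$ blocks produced in aggregate, what fraction are honest, and do all honest blocks get checkpointed?

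First I would argue that \emph{every honestly-produced block is eventually checkpointed}. An honest miner always mines on the longest checkpointed chain, so any honest block $B$ is, at the moment of its creation, either on the current main chain or becomes a leaf of the honest block-tree. By the checkpointing party's behavior (Section~\ref{subsec:single-node-behavior}), at the next checkpoint interval the honest checkpointer either checkpoints a main-chain descendant of $B$ (so $B$ lies on $\text{Chain}_i$ and is included), or, if $B$ has been orphaned, includes $B$ among the references $\mathcal{R}_i$ of leaves not on the main chain. Either way $B$ is referenced in some certificate and hence lands in the aggregate ledger. This mirrors the argument already used in the proof of Theorem~\ref{thm:advocate-liveness}.

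Next I would bound the total honest block count from below. Let $h$ be the probability that at least one honest block is created per round (as in Theorem~\ref{thm:advocate-liveness}); more precisely, the honest parties control a $1-\beta$ fraction of the mining power, so in expectation $1-\beta$ of all blocks produced in any long window are honest, and by a standard Chernoff/backbone-style concentration argument (cf.~\cite{EC:GarKiaLeo15}) over the execution producing $l$ aggregate blocks, the honest count is at least $(1-\beta)l$ up to negligible error. Since \emph{all} of these honest blocks appear in the ledger by the previous step, and the ledger contains no more than $l$ blocks total (every ledger block is a produced block, checkpointed exactly once), the fraction of honest blocks in the whole ledger is at least $1-\beta$, so for any $l$ consecutive blocks spanning the ledger the adversarial ratio is at most $1-(1-\beta)=\beta$, giving chain quality with parameters $l,(1-\beta)$ over the entire ledger.

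The main obstacle is the bookkeeping around the \emph{boundary of the window} and the fact that chain quality here is claimed over the \emph{entire} ledger rather than any sliding sub-window — the theorem statement sidesteps the adversarial-flooding issue noted just before it, where a burst of privately-withheld adversarial blocks can momentarily dominate a short stretch. I would need to be careful that the $l$ aggregate blocks are exactly the blocks that become checkpointed by the execution's final certificate (using the WLOG assumption that the execution ends with a final checkpoint), so that counting "all produced blocks" and "all ledger blocks" coincide; and that no honest block is double-counted or lost at an epoch boundary when it is orphaned in one epoch but referenced in the next. Handling the negligible failure probability of the concentration bound cleanly, and confirming that adversarial blocks cannot be "injected" into the ledger more than once via multiple certificates' reference lists, are the remaining details.
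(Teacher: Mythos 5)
Your proposal is correct and follows essentially the same route as the paper's proof: the paper likewise argues that the honest parties contribute (on expectation) at least $(1-\beta)\cdot l$ of the $l$ blocks, and that the final checkpoint issued at the end of the execution references every main-chain and uncle block so the aggregate ledger contains all produced blocks, giving an honest ratio of at least $1-\beta$. The additional concentration and bookkeeping details you flag are extra care beyond the paper's expectation-level argument, not a divergence in approach.
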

\begin{proof}
    During the entire execution, the honest parties collectively create (on expectation) at least $(1 - \beta) \cdot l$ blocks. At the end of the execution, a checkpoint is issued, which references all main chain and uncle blocks that are not checkpointed. After the issuing of the last checkpoint, the aggregate ledger at the end of the execution contains all blocks created by all parties, hence the ratio of honest blocks in the final, aggregate ledger is at least $1 - \beta$.
    \qedsymbol
\end{proof}

\paragraph{\protocol with hooks.} To achieve chain quality for smaller windows of blocks, we propose a slightly modified version of \protocolnosp. Now, each block contains a reference to the latest checkpoint certificate $C_j$ at the time it was mined. Next, such block can be referenced by a certificate $C_i$ only if $i - j \leq t$, \ie it can be referenced only by one of the $t$ certificates that immediately follow $C_j$. This constraint, called a \emph{hook}, prevents $\adversary$ from releasing old blocks.

Theorem~\ref{thm:advocate-cq-hooks} shows that \protocol with hooks ensures chain quality for any window of blocks containing $t$ consecutive checkpoints.

\begin{theorem}[Short Term Chain Quality]\label{thm:advocate-cq-hooks}
    Under \protocol \emph{with hooks}, the ratio of honest blocks in any window of $l$ consecutive blocks, which includes $t$ checkpoints, is at least $\frac{(1 - \beta) \cdot (t - 1)}{t + \beta + t \cdot \beta - 1}$, where $\beta$ is the adversarial power.
\end{theorem}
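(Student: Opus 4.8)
The plan is to fix an arbitrary window $W$ of $l$ consecutive ledger blocks that contains $t$ checkpoints — say the checkpoint blocks of certificates $C_{a+1},\dots,C_{a+t}$ all appear in $W$ — and to lower bound the fraction of honest blocks in $W$ by separately (i) lower bounding the number of honest blocks that \emph{must} lie in $W$ and (ii) upper bounding, via the hook constraint, the number of adversarial blocks that can \emph{possibly} lie in $W$. Throughout I would write $n$ for the expected number of blocks produced (by all parties together) during one inter-checkpoint epoch; since the honest mining power is $1-\beta$, in each epoch the honest parties produce $(1-\beta)n$ blocks and the adversary produces $\beta n$ blocks in expectation. The point is that $n$ cancels in the final ratio, matching the scale-free bound in the statement.

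For the honest lower bound: $W$ contains the entire main-chain segment between the checkpoint blocks $B_{a+1}$ and $B_{a+t}$, which spans exactly $t-1$ complete inter-checkpoint epochs. By the ledger-construction rules (Section~\ref{subsec:decoupled-validity}) and the fact that the honest checkpointing party references every yet-unreferenced block at each certificate, all honest blocks produced during each of these $t-1$ epochs are referenced by the immediately following checkpoint, and all of those references land inside $W$. Hence $W$ contains at least $(1-\beta)(t-1)n$ honest blocks in expectation. One also has to check that the adversary's best strategy is to abstain from the main chain: contributing adversarial blocks to the main chain only shortens epochs and still leaves the within-epoch honest-to-adversarial ratio at $(1-\beta):\beta$ while simultaneously shrinking the adversary's private stash, so it cannot decrease chain quality; this justifies using the same per-epoch count $n$ in both estimates.

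For the adversarial upper bound I would use the hook. Every adversarial block in $W$ is referenced by some checkpoint certificate, and the only certificates whose referenced blocks can land in $W$ are the at most $t+1$ consecutive certificates $C_{a},\dots,C_{a+t}$ — the $t$ certificates whose checkpoint blocks lie in $W$, plus $C_a$, whose referenced uncles immediately precede $B_{a+1}$, and $C_{a+t}$, whose referenced uncles immediately follow $B_{a+t}$, either of which the window may overlap. By the hook rule, a block referenced by $C_j$ was mined during one of the $t$ epochs $j-t,\dots,j-1$; taking the union over $j\in\{a,\dots,a+t\}$, every adversarial block in $W$ was mined during one of the $2t$ epochs $a-t,\dots,a+t-1$, so $W$ contains at most $2t\beta n$ adversarial blocks. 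Combining the two bounds, the fraction of honest blocks in $W$ is at least
\[
\frac{(1-\beta)(t-1)n}{(1-\beta)(t-1)n+2t\beta n}=\frac{(1-\beta)(t-1)}{(1-\beta)(t-1)+2t\beta}=\frac{(1-\beta)(t-1)}{(t-1)+\beta(t+1)},
\]
which is the claimed bound, since $(1-\beta)(t-1)+2t\beta=(t-1)+\beta(t+1)=t+\beta+t\beta-1$.

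The main obstacle is the book-keeping in (i)–(ii) against an adaptive, rushing adversary: precisely relating an arbitrary $l$-block ledger window to a clean block of $t$ consecutive checkpoints (absorbing the $\pm c$ offset between a checkpointed block $B_i$ and its referring block), and pinning down exactly which epochs' uncles the window can overlap — this is where the hook is exploited and where the asymmetric constants $t-1$ (honest side) and $2t$ (adversarial side) come from. A secondary point is arguing rigorously that the ``adversary abstains from the main chain'' execution is worst case so that a single per-epoch block count $n$ is legitimate in both estimates; and, if a with-high-probability rather than in-expectation statement is wanted, upgrading ``$\beta n$ in expectation'' to a Chernoff tail bound on the per-epoch block counts.
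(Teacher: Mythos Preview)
Your proposal is correct and follows essentially the same approach as the paper: lower bound the honest blocks by the $(t-1)$ complete inter-checkpoint epochs contained in the window, upper bound the adversarial blocks via the hook constraint by $2t$ epochs' worth of adversarial production, and combine to get the ratio $\frac{(1-\beta)(t-1)}{(1-\beta)(t-1)+2t\beta}=\frac{(1-\beta)(t-1)}{t+\beta+t\beta-1}$. Your boundary accounting (identifying exactly which certificates' references can overlap $W$ and unioning their hook windows to get $2t$ epochs) is in fact more explicit than the paper's, which simply asserts the $2t\beta\Upsilon$ adversarial and $(t-1)(1-\beta)\Upsilon$ honest counts without spelling out the edge cases.
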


\begin{proof}
    Let $\Upsilon$ be the maximum number of blocks that are produced on expectation by \emph{all} parties (honest and adversarial) between two consecutive checkpoints. 
    Without loss of generality, assume a window of blocks which begins with the checkpointed block of certificate $C_i$ and ends with the checkpointed block of certificate $C_{i+t}$. $C_i$ can reference at most $t \cdot \beta \cdot \Upsilon$ adversarial blocks (\ie which have been created after certificate $C_{i-t}$) and at minimum $0$ honest blocks (\ie if all honest blocks created between certificates $C_{i-1}$ and $C_i$ are part of the main chain). 
    Also, certificate $C_{i+t}$ can reference at most $(1 - \beta) \cdot \Upsilon$ honest blocks and at minimum $0$ adversarial blocks (\ie if all honest blocks created between certificates $C_{i+t-1}$ and $C_{i+t}$ are uncle blocks and all such adversarial blocks are part of the main chain). 
    In this case, the above window of blocks contains $2 \cdot t \cdot \beta \cdot \Upsilon$ adversarial blocks and $(t - 1) \cdot (1 - \beta) \cdot \Upsilon$ honest blocks, hence the ratio of honest blocks is $\frac{(1 - \beta) \cdot (t - 1)}{t + \beta + t \cdot \beta - 1}$.
    \qedsymbol
\end{proof}

Finally, we introduce two performance metrics  that accentuate the functionality of \protocolnosp. First, the \emph{chain inclusion gap} (Definition~\ref{def:chain-inclusion-gap}) expresses the expected number of blocks until a new block is stable. Corollary~\ref{cor:advocate-chain-inclusion} shows that plain \protocol cannot ensure a chain inclusion gap, whereas \protocol with hooks guarantees a chain inclusion gap of $(\beta \cdot t - \beta + 1) \cdot \Upsilon$ blocks, 
where $\Upsilon$ is the maximum number of blocks that all parties produce on expectation between two consecutive checkpoints; the proof follows directly from Theorems~\ref{thm:advocate-cq} and~\ref{thm:advocate-cq-hooks}. We know that the chain grows at the rate $1-\beta$; thus, the maximum number of blocks mined between two checkpoints is $\frac{e}{1-\beta} = \Upsilon$.
Observe that the chain inclusion gap increases linearly with $\Upsilon$ and, consequently, with the epoch length, this is a direct improvement on the result of~\cite{cryptoeprint:2020:173}, where it increases exponentially under adversarial mining majority.

\begin{definition}[Chain Inclusion Gap]\label{def:chain-inclusion-gap}
    Let party $P$ with a main chain $C$ of length $l$, which creates a new block $B$. The chain inclusion gap with parameter $g$ states that, when $B$ becomes stable, its position in the aggregate ledger is at most $l + g$.
\end{definition}

\begin{corollary}[\protocol Chain Inclusion Gap]\label{cor:advocate-chain-inclusion}
    \protocol guarantees chain inclusion gap (cf. Definition~\ref{def:chain-inclusion-gap}) with parameter $g = \infty$. \protocol \emph{with hooks} guarantees chain inclusion gap with parameter $g = (\beta \cdot t - \beta + 1) \cdot \frac{e}{1-\beta}$, where $\beta$ is the adversarial power, $t$ is the hook parameter, and $e$ is the checkpoint epoch length. 
\end{corollary}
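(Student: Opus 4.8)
The plan is to derive both parts from the chain-quality analysis (Theorems~\ref{thm:advocate-cq} and~\ref{thm:advocate-cq-hooks}) together with the chain-growth fact that, since the main chain advances at rate $1-\beta$, a full epoch of $e$ main-chain blocks is produced alongside at most $\Upsilon = \frac{e}{1-\beta}$ blocks in total (honest plus adversarial, main chain plus uncles); so I would first re-establish this $\Upsilon = \frac{e}{1-\beta}$ bound explicitly in the checkpointed execution, noting that the honest checkpointing node's enforcement of the longest-checkpointed-chain rule leaves the underlying growth argument intact, and that $e = (1-\beta)\Upsilon$ equivalently.

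For plain \protocolnosp, I would invoke the flooding attack already used to motivate Theorem~\ref{thm:advocate-cq}: with no hook constraint, the adversary can mine in private over an arbitrarily long stretch, withhold all of those blocks, and then release them at once so that a single certificate $C$ references all of them; by the ledger-construction rule of Section~\ref{subsec:decoupled-validity} every such block is spliced into the aggregate ledger immediately after the referring block of $C$, ahead of any honest block that is checkpointed only by a later certificate. Staging the release so that this flood is ordered before the eventual position of an honest block $B$ created at main-chain length $l$ shows that the position of $B$ can exceed $l + g'$ for every finite $g'$; hence no finite chain-inclusion gap is attainable and $g = \infty$.

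For \protocol with hooks, let $B$ be created at main-chain length $l$ and let $C_j$ be the latest certificate at that time, so by the hook rule $B$ carries a reference to $C_j$ and can be referenced only by one of $C_j, \ldots, C_{j+t}$; consequently $B$ is checkpointed, and by Theorem~\ref{thm:advocate-safety} becomes stable, within a bounded number of epochs. I would then bound the position of $B$ in the aggregate ledger at that point by two contributions: (i) the main-chain blocks produced between $B$'s creation and the referring block of the certificate that checkpoints $B$, which by the chain-growth bound is at most one further epoch, contributing at most $e = (1-\beta)\Upsilon$ blocks; and (ii) the uncle blocks that the ledger-construction rule orders before $B$, which — because the hook forbids referencing a block more than $t$ certificates after the one it names — can only be blocks created within the last $t$ epochs, of which at most a $\beta$-fraction of each epoch's $\Upsilon$ blocks are adversarial, i.e.\ at most $t\beta\Upsilon$ such uncle blocks (the same estimate "$C_i$ can reference at most $t\beta\Upsilon$ adversarial blocks" used in the proof of Theorem~\ref{thm:advocate-cq-hooks}). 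Summing, the position of $B$ is at most $l + (1-\beta)\Upsilon + t\beta\Upsilon = l + (\beta t - \beta + 1)\Upsilon = l + (\beta t - \beta + 1)\cdot\frac{e}{1-\beta}$, which is the claimed gap.

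The main obstacle is the bookkeeping in step (ii): one must treat uniformly the two cases where $B$ ends up on the main chain versus as an uncle and, in the latter case, pin down where $B$ lands within the topological sort $\pi(F_i)$ so as to count correctly which referenced uncle blocks precede it — the adversary will try to arrange both its hook references and the tie-breaking order of $\pi$ so that as many withheld blocks as possible sort ahead of $B$. Getting the referring-block offset terms (the $c$'s) and the "one extra epoch" accounting exactly right while keeping the bound tight enough to match the stated constant is the delicate part; the plain-\protocolnosp\ case and the reduction to Theorems~\ref{thm:advocate-cq} and~\ref{thm:advocate-cq-hooks} are essentially immediate.
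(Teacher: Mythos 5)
Your proposal is correct and follows essentially the same route as the paper: the $g=\infty$ part via the private-flooding observation that motivates Theorem~\ref{thm:advocate-cq}, and the hooks bound by combining the $t\beta\Upsilon$ reference bound from Theorem~\ref{thm:advocate-cq-hooks} with the chain-growth fact $\Upsilon = \frac{e}{1-\beta}$, giving $(1-\beta)\Upsilon + t\beta\Upsilon = (\beta t - \beta + 1)\cdot\frac{e}{1-\beta}$. In fact the paper offers only this sketch (``follows directly from Theorems~\ref{thm:advocate-cq} and~\ref{thm:advocate-cq-hooks}'' plus the $\Upsilon$ computation), so your write-up supplies strictly more detail, including the bookkeeping caveats the paper glosses over.
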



Second, \emph{optimistic serializability} (Definition~\ref{def:serializability}) ensures that the checkpointing service does not trivialize the ledger maintenance. Specifically, under fully honest conditions, \ie $\beta=0$, transactions are ordered in the ledger in the order of their arrival, if such arrival order exists. 

\begin{definition}[Optimistic Serializability]\label{def:serializability}
     For two transactions $tx$, $tx'$, where $tx$ was given as an input to all honest nodes at round $r$ and is valid w.r.t. ledger $L_P(r)$ at round $r$ and $tx'$ was given as an input to all honest parties after round $r$, it holds that for any $r'>r$, the ledger $L_P(r')$ of any honest party $P$ cannot include $tx'$,$tx$ in this order, given that the network consists of all honest nodes.
\end{definition}

Permissionless protocols like Nakamoto longest chain ensures optimistic serializability and the checkpointing service does not affect the block ordering; playing a supplementary role. Evidently, \protocol satisfies optimistic serializability by design.

\begin{corollary}[\protocol optimistic serializability]\label{cor:advocate-chain-inclusion}
    \protocol guarantees optimistic serializability (cf. Definition~\ref{def:serializability}).
\end{corollary}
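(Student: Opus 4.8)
The plan is to deduce the corollary from the serializability that the underlying longest-chain protocol already satisfies, by showing that when $\beta = 0$ the checkpointing service of \protocol only \emph{refines} the main-chain block order and never moves an honestly-produced transaction ahead of an earlier-arriving one. Fix an all-honest execution and transactions $tx, tx'$ as in Definition~\ref{def:serializability}: $tx$ is given to every honest party at round $r$ and is valid \wrt $L_P(r)$, while $tx'$ is given to all honest parties at some round $\rho > r$. If either transaction never appears in $P$'s sanitized ledger the statement is vacuous, so assume both do, and fix an observation round $r' > r$.

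The first step is to place both transactions on the main chain. By the mining rule of Section~\ref{subsec:advocate-miner} (a freshly mined block carries the miner's entire mempool) together with the validity of $tx$, the first main-chain block mined at a round $\ge r$ must contain $tx$: its ancestors were all mined before round $r$ and hence do not hold $tx$, while its miner has $tx$ in its mempool once round $r$ has passed. Call this block $B$ and its creation round $t_B \ge r$; define $B'$ and $t_{B'} \ge \rho$ analogously for $tx'$. The second step is an ordering observation on the final main chain: a child block is always mined after its parent, so a block's height is weakly monotone in its creation round; combined with $r < \rho \le t_{B'}$ this forces $t_B \le t_{B'}$, and therefore $B$ has height at most that of $B'$. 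If $B = B'$, the two transactions sit in the same block, and a FIFO mempool (which we may assume, so that intra-block order respects arrival) places the earlier-arriving $tx$ before $tx'$. If $B \ne B'$, then $B$ strictly precedes $B'$ on the main chain; since the aggregate-ledger construction of Section~\ref{subsec:decoupled-validity} always places a main-chain block before the uncle blocks referenced by its epoch's certificate, and sanitization (Figure~\ref{fig:sanitization}) keeps only the first occurrence of each transaction, the effective ledger positions of $tx$ and $tx'$ coincide with those of $B$ and $B'$. In either case $L_P(r')$ cannot list $tx'$ before $tx$.

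The main obstacle I anticipate is justifying that the first \emph{ledger} occurrence of $tx$ lies on the main chain rather than in a ``late-inserted'' uncle. Even with $\beta = 0$, $tx$ may first appear in an uncle --- two honest miners extending the same tip in the same round, or a miner lagging by up to the network delay --- and that uncle is appended only after its epoch's referring block, \ie behind potentially many later main-chain blocks. I would close this gap by observing that an honest miner retains $tx$ in its mempool until $tx$ lies on its \emph{own} chain, so the next main-chain block it produces (still at a round $\ge r$) also carries $tx$; an epoch-boundary accounting then shows this main-chain copy precedes every uncle copy, and sanitization discards the latter. The only remaining corner case --- $tx$ becoming invalid before reaching any main-chain block, so it is pruned from every honest ledger --- merely makes the statement vacuous. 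A secondary, purely modelling point to state explicitly is the FIFO-mempool assumption invoked in the $B = B'$ case: without some arrival-consistent intra-block order, serializability has no content at the granularity of a single block.
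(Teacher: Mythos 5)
Your proof is correct and follows essentially the same route as the paper's own argument: under $\beta=0$ the Nakamoto main chain already orders $tx$ before $tx'$, and any occurrences re-introduced via checkpoint references land after the corresponding referring block and are stripped as duplicate (second) occurrences by ledger sanitization. You simply work out in detail the step the paper states informally --- that the first ledger occurrence of each transaction lies on the main chain, so uncle copies are always later duplicates --- and you additionally make explicit an intra-block FIFO-mempool assumption for the case where both transactions end up in the same block, a point the paper leaves implicit.
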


Consider a Nakamoto longest chain protocol; it is easy to show that it guarantees optimistic serializability. Since the honest nodes received $tx$ before $tx'$, all miners will mine ledger with $tx$ before $tx'$. Even when the ledger is forked, within each fork, the parent is known and hence the order is maintained for all parties. In \protocolnosp, it may happen that the certificate refers to a transaction $tx$ again; hence the ledger $L_P(r)$ might have transactions $tx$, $tx'$ in that order. However, since the base consensus is Nakamoto; it will ensure that $tx$, $tx'$ exists in that order before checkpointing. Thus the $tx'$ referred by the checkpoint $C_r$ will be a second occurrence and will be removed by ledger sanitization.

\subsection{Contrast with Fruitchains and Conflux}
In terms of safety, the transaction inclusion from $F_i$ has similarity to fruits in Fruitchains \cite{fruitchains} and DAG references in the pivot chain of Conflux \cite{conflux}. However,  we note that if the adversarial mining power is greater than 50\%, the adversary can always beat honest nodes by creating a mainchain in Fruitchain and a conflicting pivot chain in Conflux, thus violating safety.

In terms of liveness, Fruitchains has a chain quality of $1-\beta$, which is ensured because all blocks created by honest miners are eventually included as fruits. This chain quality however is reduced to $0$ for $\beta>0.5$. To understand this abrupt loss of chain quality, let us consider an attack by a $51\%$ adversary. The adversary creates a longer blockchain, consisting of only adversarial blocks, and the adversarial blocks do not include references to fruits mined by honest miners. Since the longest blockchain is chosen to create the fruit ledger, it will not consist of any honest fruits, rendering the chain quality $0$. Similar arguments can be made for a heavier pivot chain generated by adversaries in Conflux. This abrupt loss of chain quality for $\beta>0.5$ is depicted in Figure \ref{fig:cq_comparison}.

Note that implementing existing checkpointing designs like Casper to Fruitchains will not will not improve chain quality since the adversarial majority can ensure that the main chain only contains blocks referring to no honest fruits; hence even the checkpointed ledger will not contain any honest fruits. Instead, Advocate checkpoints all leaves of the block tree; this achieves optimal chain quality (Theorem 3) because the final ledger consists of all produced blocks, where the percentage of adversarial blocks is equal to the adversarial mining power (the lower bound).
\section{\protocol with BFT checkpointing}
\label{sec:bft}

Although \protocolnosp, as described above, satisfies the desired security properties, it assumes a single checkpointing node. This centralized design is problematic, especially in systems like distributed ledgers, whose main purpose is decentralization. In this section, we present \protocol with Byzantine Fault Tolerant (BFT) checkpointing, which extends the single checkpointing node with a committee of $n$ nodes. Although this extension might seem trivial, there are certain fine points (for example running a BFT-SMR with external state validation) that need to be analyzed to establish equivalency. In contrast to other checkpointing protocols, \protocol-BFT allows multiple checkpointing candidates, since a candidate includes both the checkpointed block and the reference links, thus increasing the input space for the BFT committee nodes. The committee achieves consensus on the contents of the checkpoint certificate, which is then published on the main chain.

We present a design that optimizes transaction inclusion and confirmation latency. 
The adversary $\adversary$ controls up to $f$ committee nodes, s.t. $n \geq 3f+1$, and a fraction $\beta \in [0, 1)$ of the PoW mining power. Note that the BFT committee is independent from the committee of miners. Hence, it is possible for the committee of miners (\ie a small community for novel PoW chains) to be in adversarial majority, whereas the BFT committee (consisting of well established and legally bound validators) is in honest supermajority. 

\subsection{\protocolnosp-BFT}


The committee nodes act as full nodes for the PoW main chain and run a separate SMR (BFT-based State Machine Replication) protocol; 
any generic BFT-SMR protocol should suffice. On receiving a valid PoW block, the committee node posts a transaction $\langle Blockhash, Depth \rangle$ on the SMR chain, which is finalized after some rounds as per the BFT-SMR's rules.

The SMR chain announces a new checkpoint when a transaction containing a block with depth $e$ more than the previous checkpoint is posted on the SMR chain. A checkpoint transaction $tC_i = \{H(B_i),M(R_i)\}$ is posted on the SMR chain, where $R_i$ consists of all the main chain blocks referenced on the SMR chain between the references for $B_{i-1}$ and $B_i$, $M(R_i)$ denotes it's Merkle root and $H(B_i)$ denotes hash of block $B_i$.

The checkpoint certificate $C_i = \{\{tC_i\},R_i,w_i\}$ consists of the checkpoint transaction $tC_i = {B_i,M(R_i)}$, a witness $w_i$ stating that it is finalized on the SMR chain, and the list of references $R_i$. $C_i$ should be posted on the PoW chain before the depth of $d(B_i)+c$, where $d(B_i)$ is the depth of checkpoint $B_i$. 

The SMR chain's block inclusion validity rules are as follows:
\begin{itemize}
    \item Data availability: The block should be available.
    \item Block validity: The block should have valid PoW; 
    note that full transaction validity is not required due to main chain's ledger sanitization. 
    \item No checkpoint conflict: The block should not be at a height of $d(B_i)+e$ and extend a chain that does not contain $B_i$.
\end{itemize}

Evidently, the BFT checkpointing service realizes in a distributed manner the checkpointing node of Section \ref{sec:ideal}. Specifically, the BFT service collects all leaves which are not checkpointed, including the main chain and uncle blocks, and issues a certificate which references them. Assuming the BFT protocol is secure (safe and live), the committee will i) issue a certificate which ii) references all non-checkpointed blocks, so the analysis of Section \ref{sec:ideal} also applies here. 

\paragraph{Main chain behavior}
The main chain miners act as light nodes for the SMR chain. They include $C_i$ in the main chain as soon as $tC_i$ is finalized on the SMR chain. We assume that the main chain nodes are connected to at least one honest SMR chain node, to get the references from $M(R_i)$.
The mining behavior and validity rules including the $c$ constraint on checkpoint inclusion remain the same as described in Section \ref{sec:ideal}.

\paragraph{Latency}
To compute the latency for a transaction, let $\tau_m$ be the time until a block $B$ containing the transaction is mined. Also let $\tau_t$ be the time until a transaction containing the hash of $B$ is posted on the SMR chain and $\tau_f$ be the time until that transaction is finalized on the SMR chain. The total time until an honest transaction is considered for checkpointing is  $\tau_i = \tau_m + \tau_t + \tau_f$. Observe that the value $\tau_t$ is not affected by the adversarial mining fraction.
Finally, the transaction is confirmed when the next checkpoint is posted on the SMR chain, \ie after time $\tau_c$ until the checkpoint is finalized. Therefore, the overall latency of a transaction is $\tau = \tau_i + \tau_c$. 
We note that, the parameter $c$ depends on the BFT latency, \ie $\Delta_{BFT}$. 

\subsection{BFT integration}
We  abstract the BFT functionality $F_{BFT}$ and show equivalence with the checkpointing service $F_{cps}$ described in Section \ref{sec:ideal}.

\vspace{0.3cm}
\begin{mdframed}[backgroundcolor=black!10,rightline=false,leftline=false]

\begin{center}
    {\bf BFT-SMR service $F_{BFT}$}
\end{center}

    \vspace{0.1cm}
    
$F_{BFT}$ is a part of a network of $P$ replicas participating in BFT-SMR. $F_{BFT}$ takes an input $I_{BFT}$ and outputs $O_{BFT}$ after a delay bounded by $\Delta_{BFT}$. $F_{BFT}$ may take no input and still output $O_{BFT}$ depending on the state $S_{BFT}$ and $I_{lBFT}$: an input received by some replica. $F_{BFT}$ checks validity of $I_{BFT}$ and/or $I_{lBFT}$ with respect to $S_{BFT}$ stored locally using $V_{BFT}$.
A message from one replica implementing $F_{BFT}$ to another takes a maximum delay of $\Delta$.
  
\end{mdframed}
\vspace{0.3cm}

We now establish an equivalence between $F_{BFT}$ and $F_{cps}$. {\em inputValue} in $F_{cps}$ is equivalent to $I_{BFT}$ in $F_{BFT}$, while {\em commitDecision} is equivalent to $O_{BFT}$. However, a major difference is that the delay between the two is $\Delta_{BFT}$ in $F_{BFT}$ and $\Delta_{cps}$ in $F_{cps}$, s.t. $\Delta_{BFT} / \Delta_{cps} > 1$. Note that $\Delta_{BFT}$ is dependent on the liveness parameter $u$ of the BFT protocol. A further major deviation regards to the implementation of $V_{BFT}$, which corresponds to the {\em isInputValid} and {\em inputValidity} messages in $F_{cps}$. Since the state is not in the same module and the input may be indirectly received from a different replica, the data needed for validating $I_{lBFT}$ may not be available to $F_{advocate}$, thus returning {\em inputValidity} may take unknown time. This is resolved via a unified network functionality $\mathcal{N}_{uni}$ (Figure \ref{fig:network_abstractor}) to which each BFT replica connects.

\vspace{0.1cm}
\begin{mdframed}[backgroundcolor=black!10,rightline=false,leftline=false]

\begin{center}
    {\bf Unified network functionality $\mathcal{N}_{uni}$}
\end{center}

    \vspace{0.1cm}
    
    $\mathcal{N}_{uni}$ gets messages from the nodes connected to it. The message handling is split in 3 levels. The first level is the {\em Network Handler}, which manages network functions like message downloading and forwarding. Once the message is downloaded, it is passed to the {\em Validity Handler}, which verifies the message \wrt a well-defined validity predicate $V_{BFT}$, which utilizes {\em isInputValid} and {\em inputValidity}. Once the checks pass, the validity handler forwards the message to $F_{BFT}$, marking the message as {\em received}.
    
\end{mdframed}
\vspace{0.3cm}

Note that $\mathcal{N}_{uni}$ ensures that, once a message is received by $F_{BFT}$, checking $V_{BFT}$ is instantaneous, thus replicating a local state. Moreover, $\mathcal{N}_{uni}$ does not change the synchronous setting delay. Specifically, let $i$ be the first honest node to receive a message $m$ at time $t$. By definition, $F_{\protocol}$ sends {\em inputValidity} to node $i$ at some time $t' \leq t$. Now, the message $m$ propagates across all nodes in a synchronous manner, hence it is downloaded by each honest node $j$ at the latest at time $t+\Delta$. Therefore, $F_{advocate}$ is queried by node $j$ regarding {\em isInputValid} of $m$ at time $t+\Delta$. Since $F_{\protocol}$ replied {\em inputValidity} to node $i$ at time $t$, it will also reply {\em inputValidity} to $j$ at $t+\Delta$. Thus, $m$ is marked as \emph{received} by node $j$ by time $t+\Delta$, ensuring that the network is $\Delta$-synchronous under $\mathcal{N}_{uni}$.

\begin{figure}
    \centering
    \includegraphics[width = .4\columnwidth]{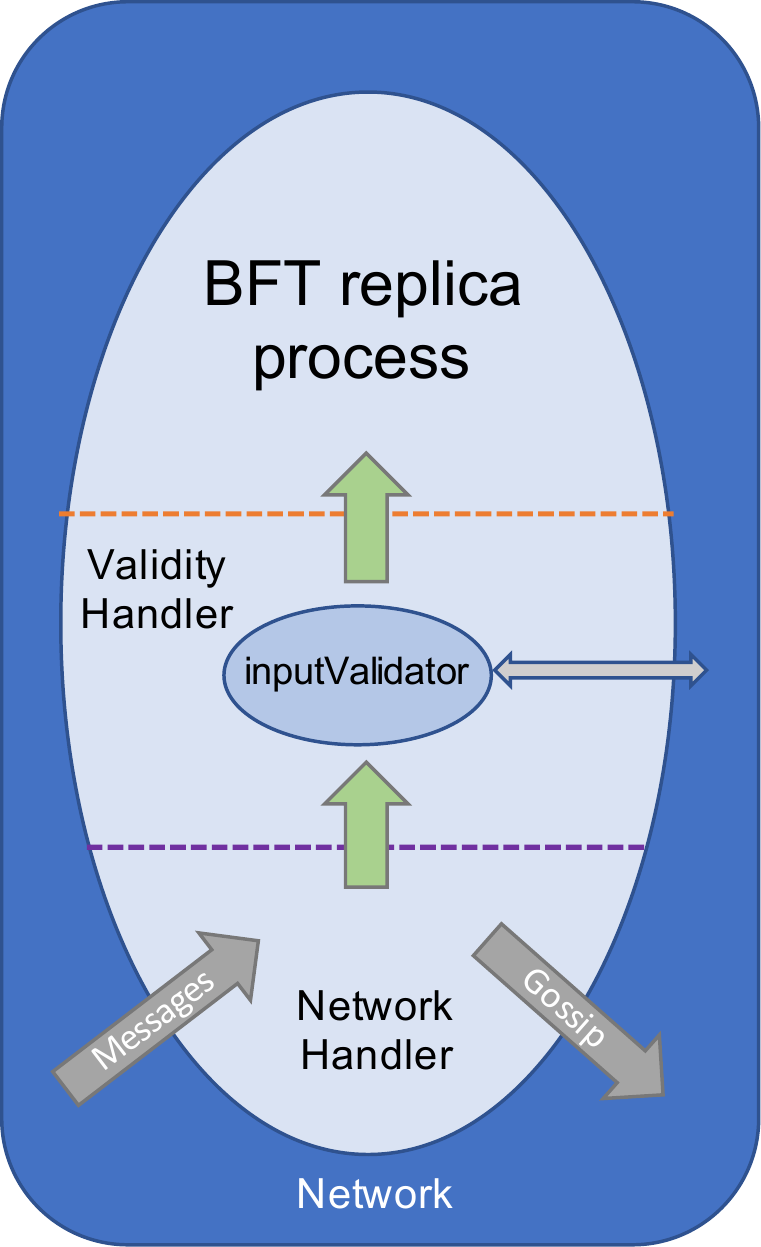}
    \caption{The network layer of BFT is modified to accommodate the external state validation (isInputValid) functionality; the end impact on BFT replica process is equivalent to being connected to a synchronous network without $F_{Advocate}$}
    \label{fig:network_abstractor}
    \vspace{-4mm}
\end{figure} 

\paragraph{Properties of the BFT service}

As discussed above, as long as the BFT-SMR protocol is secure, \ie satisfies safety and liveness, it securely realizes the single checkpointing node in a distributed manner. However, our integration of the federation BFT into checkpoints affects the values of the parameters $c, e$. Specifically, $c$ should be at least $\tau_f / \tau_r$ larger than in Section~\ref{sec:ideal}, $\tau_f$ being the time required for a transaction to be finalized by the BFT and $\tau_r$ being the size of each round of the ledger protocol. Therefore, faster BFT protocols are preferable, in order to minimize the time until the checkpoint certificate is finalized. 

In the next section, we present a variant of Advocate for multi-chain systems. 
\section{\protocolnosp: Checkpointing for Parallel-PoW chains}
\label{sec:prism}

Many emerging PoW blockchains rely on a ``parallel-chain'' architecture for scaling, where multiple chains run in parallel and are aggregated. Two successful parallel-chain architectures are  {\sf Prism} \cite{BagariaKTFV19} and  {\sf OHIE}  \cite{yu2019ohie}. Although these two protocols are significantly different from each other, we demonstrate the generalizability of \protocolnosp; we extend \protocol to both these settings by proposing a {\em meta-protocol}  \protocolnosp-PC for integrating \protocol to parallel-chain architectures. 
For simplicity, we design \protocolnosp-PC using a single (honest) checkpointing node, which can be readily extended to a BFT federation as described in Section \ref{sec:bft}.

\subsection{\protocolnosp-PC: Meta-protocol}
Consider $M$ parallel chains, with mining power {\em sortition} across them. A block is labelled as $B_{m,j,f_b}$ if it belongs to branch $f_b$ of chain $m$ and has rank $j$. $f_b$ is a function of parent of $B_{m,j,f_b}$. The rank $\calR(B_{m,j,f_b})=j$ of a block is determined by the parallel-chain protocol's specifics and is deterministic when a block is mined. We highlight the first important meta-principle: 
\begin{quote}
 {\bf Rank criterion:} {\em Blocks mined by honest miners have monotonically increasing rank within a chain}.    
\end{quote}
Block ranks are used to determine epoch intervals with a checkpointing epoch spanning blocks with a rank-difference of $e$. 
We denote a chain as {\em payload-carrying}, denoted by $Y(i)=1$ if its blocks are designed to carry a transaction payload. We set one chain (chain $0$) as a {\em base-chain}, which may or may not be payload-carrying.\\ 

\noindent{\bf Checkpointing party behavior}. 
\begin{itemize}
    \item Upon receiving Block $B_{0,j,f_b}$ at rank $\calR(\tilde{C}_{i-1})+e$, where $\tilde{C}_{i-1}$ is the latest checkpointed base-chain block, it creates a new checkpoint certificate $C_i$, the block $B_{0,j,f_b}$ is the checkpointed block $\tilde{C_i}$; 
    \item The certificate  $C_i=\{R(\tilde{C_i}),R_i,\mathbf{B_i}\}$ defines: 
    (a) a vector of $M-1$ parallel-chain blocks $\mathbf{B_i}$, \ie one tip block from each parallel chain except the base-chain,
    (b) the reference list $R_i$ of all payload-carrying blocks not referenced by any checkpoint until Rank $\calR(\tilde{C_i})$, 
    (c) a reference $R(\tilde{C}_i)$ to the checkpointed base-chain block $\tilde{C}_i$. 
\end{itemize}

\noindent{\bf Validity rules}. 
\begin{itemize}
    \item All chains extend the latest checkpoint;
    \item A base-chain block is invalid if it extends the chain past rank $\calR(\tilde{C}_i)+c$ and does not contain $C_i$; 
    \item A non base-chain is invalid if none of it's blocks refer to the base-chain block containing the certificate $C_i$ by rank $\calR(\tilde{C}_i)+c$; 
    \item A non base-chain tip block $B_{m,j,f_b}$ is valid for inclusion in $\mathbf{B_i}$ only if chain $m$ has referred to $C_{i-1}$.
\end{itemize}

The ledger creation rules are similar to \protocolnosp; the checkpoint certificate brings in all the referred blocks $R_i$ in the respective payload-carrying chain's ledger. 

{\em \protocolnosp-OHIE:} We observe that \texttt{rank} in {\sf OHIE} satisfies the rank criterion and thus can be treated as $\calR$. The meta-protocol can be integrated into OHIE by setting the chain 0 as base chain and assigning its protocol defined \texttt{rank} with the \protocol rank $\mathcal{R}$. A similar extension of the rank criterion works for ledger combiner as well.

{\em \protocolnosp-Prism:} We observe that Prism's proposer levels follow the {\em Rank criterion} since honest miners always mine blocks with increasing proposer levels. We can integrate \protocolnosp-PC by setting the Proposer chain as the base-chain and assigning proposer level as \protocol rank $\mathcal{R}$.

\section{Implementation and Evaluation}
\label{sec:implementation}
\label{sec:evaluation}

We implement \protocol on a codebase in \texttt{Rust} and compare its performance with various existing checkpointing techniques. To test the performance of \protocol to the limit, we integrate \protocolnosp-Prism along with the high performance implementation of {\sf Prism} written in \texttt{Rust} \cite{prism_systems}. We evaluate the performance of \protocol comparing with various other checkpointing protocols. The code is available at \cite{advocate_sys_anon}.

\subsection{Comparison baselines}

We  implement two other checkpointing protocols as baselines to  compare performance metrics of \protocol. We briefly describe these baselines and their integration with {\sf Prism} below. 

\noindent{\bf \randomnessnosp-checkpointing}. Derived from  \cite{cryptoeprint:2020:173}, the checkpoint certificates referring to a single Block-hash(checkpoint) are introduced in the ledger. The certificates add randomness at every epoch, ensuring the adversary cannot implement a front-running attack described in \cite{cryptoeprint:2020:173}. \randomnessnosp-checkpointing is implemented by modifying the \protocol $F_{cps}$ to generate checkpoint without references. 

\noindent{\bf \basicnosp-checkpointing}. Derived from the off-chain checkpoints published by Nakamoto in the early days of bitcoin (checkpointing via GitHub). The checkpoints certificates are posted off-chain and consist of a block's hash. The full node codebase recognizes these checkpoints and only considers chains extending these checkpoints as valid. \basicnosp-checkpointing is implemented by modifying code for \randomnessnosp-checkpointing not to include the certificate on chain.

\noindent{\bf Experimental setup:} We run \protocol and \protocolnosp-Prism experiments on c5d.large and c5d.4xlarge AWS instances respectively. We run our experiments with $\beta\geq 0.5$ with private mining attack where the adversary mines a private chain and broadcast private blocks in bursts of epoch length. Our evaluation answers the following questions:
\begin{enumerate}
    \item How do performance and security metrics of \protocol compare to state-of-the-art  checkpointing and finality gadgets?
    \item How does the performance of \protocol react to  a slow checkpointing service?
    \item How does \protocol perform with large epoch sizes? 
    \item How does \protocol integrate with  very high throughput  PoW blockchains, e.g.,   {\sf Prism}? How is the performance overhead?
\end{enumerate}

\begin{table*}
\vspace{-4mm}
\centering
\begin{tabular}{|*{12}{c|}}
\hline
\multicolumn{3}{|c}{\bf Parameters} & \multicolumn{3}{|c}{$e=5$, $\Delta_{BFT}$=0} & \multicolumn{3}{|c}{$e=5$, $\Delta_{BFT}$=2} & \multicolumn{3}{|c|}{$e=10$, $\Delta_{BFT}$=0} \\ \hline
\multicolumn{3}{|c|}{\bf Metrics} & $FG$ & $IL$ & $HW$ & $FG$ & $IL$ & $HW$ & $FG$ & $IL$ & $HW$ \\ \hline
\multicolumn{3}{|c|}{\basicnosp-cp} & 0.148 & 67.14 & 0.7032 & - \footnote[2]{Experiments discontinued due to established shortcomings of the protocol} & - & - & - & - & - \\\hline
\multicolumn{3}{|c|}{\randomnessnosp-cp} & 0.323 & 6.688 & 0.3539 & 0.204 & 13.76 & 0.594 & 0.227 & 21.53 & 0.546 \\\hline
\multicolumn{3}{|c|}{\protocol} & \bf 0.588 & \bf 3.611 & \bf0 & \bf0.514 &  \bf 2.546 & \bf 0 & \bf 0.475 & \bf 6.712 & \bf 0.048 \\\hline
\end{tabular}
\caption{\protocol evaluation for $\beta=0.5$}
\label{tab:50}
\vspace{-7mm}
\end{table*}

\begin{table*}
\centering
\begin{tabular}{|*{12}{c|}}
\hline
\multicolumn{3}{|c}{\bf Parameters} & \multicolumn{3}{|c}{$e=5$, $\Delta_{BFT}$=0} & \multicolumn{3}{|c}{$e=5$, $\Delta_{BFT}$=2} & \multicolumn{3}{|c|}{$e=10$, $\Delta_{BFT}$=0} \\ \hline
\multicolumn{3}{|c|}{\bf Metrics} & $FG$ & $IL$ & $HW$ & $FG$ & $IL$ & $HW$ & $FG$ & $IL$ & $HW$ \\ \hline
\multicolumn{3}{|c|}{\basicnosp-cp} & 0 & $\infty$ & 1 & - \footnote[2]{Experiments discontinued due to established shortcomings of the protocol} & - & - & - & - & -  \\\hline
\multicolumn{3}{|c|}{\randomnessnosp=cp} & 0.101 & 43.11 & 0.696 & 0.033 & 24.43 & 0.9 & 0 & $\infty$ & 1 \\\hline
\multicolumn{3}{|c|}{\protocol} & \bf 0.389 & \bf 3.491 & \bf 0 & \bf 0.330 & \bf 3.217 & \bf 0 & \bf 0.311 & \bf 6.512 & \bf 0.056 \\\hline
\end{tabular}
\caption{\protocol evaluation for $\beta=0.67$}
\label{tab:67}
\vspace{-8mm}
\end{table*}

\begin{table*}
\centering
\begin{tabular}{|*{12}{c|}}
\hline
\multicolumn{3}{|c}{\bf Parameters} & \multicolumn{3}{|c}{$e=5$, $\Delta_{BFT}$=0} & \multicolumn{3}{|c}{$e=5$, $\Delta_{BFT}$=2} & \multicolumn{3}{|c|}{$e=10$, $\Delta_{BFT}$=0} \\ \hline
\multicolumn{3}{|c|}{\bf Metrics} & $FG$ & $IL$ & $HW$ & $FG$ & $IL$ & $HW$ & $FG$ & $IL$ & $HW$ \\ \hline
\multicolumn{3}{|c|}{\basicnosp-cp} & 0 & $\infty$ & 1 & - \footnote[2]{Experiments discontinued due to established shortcomings of the protocol} & - & - & - & - & -  \\\hline
\multicolumn{3}{|c|}{\randomnessnosp-cp} & 0 & $\infty$ & 1 & 0 & $\infty$ & 1 & 0 & $\infty$ & 1 \\\hline
\multicolumn{3}{|c|}{\protocol} & \bf 0.102 & \bf 2.849 & \bf 0 & \bf 0.072 & \bf 4.319 & \bf 0.278 & \bf 0.087 & \bf 6.467 & \bf 0.13 \\\hline
\end{tabular}
\caption{\protocol evaluation for $\beta=0.9$}
\label{tab:90}
\vspace{-7mm}
\end{table*}

\noindent{\bf Performance metrics:} We use three metrics defined below to measure performance:
\begin{itemize}
    \item {\it Fractional Goodput ($FG$):} Let Goodput ($\mathcal{G}$) be the number of honest transactions confirmed per unit time and optimal throughput ($\mathcal{T}$) be the maximum throughput in the absence of an adversary. We define fractional goodput as  $\mathcal{G}/\mathcal{T}$.
    \item {\it Ledger Inclusion latency ($IL$):} for an honest party $P$ is the time taken (measure in means of block arrival time $\Delta_A$) between transaction generation and inclusion in the ledger of $P$.
    \item {\it Honest block wastage ($HW$):} Fraction of honest blocks that are not part of the ledger.
\end{itemize}
The performance metrics are tabulated in Tables \ref{tab:50}, \ref{tab:67}, \ref{tab:90} for a variety of  experimental settings: varying  adversary mining power, BFT network latency ($\Delta_{BFT}$), epoch size.  Each experiment was conducted over a range of  50-100 epochs. We make the following more broad observations from the data.
\begin{enumerate}
    \item We observe that \protocol is far superior to it's competitors in all settings and metrics
    \item Advocate takes a lesser hit on performance as compared to its competitors if the checkpointing service is slow
    \item $IL$ of Advocate increases linearly with epoch length with minimal drop in its $FG$
\end{enumerate}

\paragraph{Prism} The Prism full-stack implementation can achieve a throughput of 70K tx/s; coupled with \protocol an optimized implementation should achieve a throughput of $(1-\beta) \cdot 70K$ tx/s. However, checkpointing adds to the validity rules, leading to loss of throughput. Our implementation of \protocolnosp-Prism aims to develop a proof-of-concept, s.t. checkpointing can be integrated into high throughput parallel chains without much expense of throughput. To this end, we design an adversary $\adversary$ for Prism who censors honest transaction blocks. For such $\adversary$ with $\beta=0.7$, our full-stack implementation of \protocolnosp-Prism (with UTXO at the application layer and p2p networking at the networking layer) can achieve a goodput of {\em 8200 tx/s} with a confirmation latency of 120s. While 8200 tx/s with a 70\% adversary is much higher than any existing protocol can achieve, we believe it can be further improved by optimizing interaction between ledger manager and integrated checkpointing module, a direction which will be explored in future research. 


\section{Acknowledgements}
This research was partly supported by US Army Research Office Grant W911NF-18-1-0332, National Science Foundation CCF- 1705007, NeTS 1718270, XDC network and IOHK.

\bibliographystyle{ACM-Reference-Format}
\bibliography{references}

\appendix

%
%
%
%
\end{document}